\newtheorem{theorem}{Theorem}
\newtheorem{definition}{Definition}
\newtheorem{lemma}{Lemma}
\newtheorem{corollary}{Corollary}
\newtheorem{example}{Example}[section]
\crefname{claim}{Claim}{Claims}
\Crefname{claim}{Claim}{Claims}
\crefname{observation}{Observation}{Observations}
\Crefname{observation}{Observation}{Observations}
\newcommand{\threefield}[3]{$#1\mid#2\mid#3$}
\newcommand{\prob}{\threefield{1}{}{\sum w_jU_j}\xspace}
\newcommand{\probVariant}{\threefield{1}{}{\sum p_jU_j}\xspace}
\newcommand{\minmakespanhighmultiplicity}{\threefield{P}{\text{HM}}{C_{\max}}\xspace}
\newcommand{\mcc}{\textsc{$k$-Multicolored Clique}\xspace}
\tikzset{snake it/.style={decorate, decoration=snake}}
\newcommandx{\decprob}[6][3=Input,5=Question]{\vspace{0.125em}\begin{samepage}\begingroup
\label{DEACTIVATEDprob:#2}{
{\noindent \textsc{#1}}}
  \nopagebreak[4]\nopagebreak[4]\vspace{-0.125em}
  \par\noindent\hangindent=\parindent\textbf{#3}:  #4\nopagebreak[4]
  \par\noindent\hangindent=\parindent\textbf{#5}:  #6
  \par\medskip\endgroup\end{samepage}
}
\newcounter{dummycounter}
\newcommand{\mylabel}[2]{\refstepcounter{dummycounter}#2\def\@currentlabel{#2}\label{#1}}
\author{Klaus Heeger\footnote{Department of Industrial Engineering and Management, Ben-Gurion~University~of~the~Negev, Beer-Sheva, Israel. \texttt{heeger@post.bgu.ac.il}.} \and
Danny Hermelin\footnote{Department of Industrial Engineering and Management, Ben-Gurion~University~of~the~Negev, Beer-Sheva, Israel. \texttt{hermelin@bgu.ac.il}.}}
\title{Minimizing the Weighted Number of Tardy Jobs is W[1]-hard\footnote{Supported by the ISF, grant No.~1070/20.}}
\date{}
\begin{document}
\maketitle

\begin{abstract}
We consider the \prob problem, the problem of minimizing the weighted number of tardy jobs on a single machine. This problem is one of the most basic and fundamental problems in scheduling theory, with several different applications both in theory and practice. We prove that \prob is W[1]-hard with respect to the number~$p_{\#}$ of different processing times in the input, as well as with respect to the number $w_{\#}$ of different weights in the input. This, along with previous work, provides a complete picture for \prob from the perspective of parameterized complexity, as well as almost tight complexity bounds for the problem under the Exponential Time Hypothesis (ETH). 
\end{abstract}
\smallskip

\noindent\textbf{Keywords:} number of different weights, number of different processing times. 

\section{Introduction}
\label{sec:intro}%

In this paper we consider the following fundamental scheduling problem: We are given a set of $n$ jobs $\{x_1,\ldots,x_n\}$, where each job~$x$ is defined by three integer-valued \emph{characteristics:} A \emph{processing time} $p(x) \in \mathbb{N}$, a \emph{weight} $w(x) \in \mathbb{N}$, and a \emph{due date} $d(x) \in \mathbb{N}$. We have a single machine to process all jobs $\{x_1,\ldots,x_n\}$ non-preemptively. Thus, in this setting a \emph{schedule} for $\{x_1,\ldots,x_n\}$ is a permutation $\Pi: \{x_1,\ldots,x_n\} \to \{1,\ldots,n\}$ that specifies the processing order of each job. In this way, we schedule in~$\Pi$ a job~$x$ \emph{starting at time} $R(x)=\sum_{\Pi(y) < \Pi(x)} p(y)$; that is, the total processing time of jobs preceding $x$ in $\Pi$. The \emph{completion time} $C(x)$ of~$x$ is then defined by $C(x)=R(x)+p(x)$.
Job $x$ is said to be \emph{tardy} in $\Pi$ if $C(x) > d(x)$, and \emph{early} otherwise. Our goal is to find a schedule~$\Pi$ where the total weight of tardy jobs is minimized. Following Graham~\cite{Graham1969}, we denote this problem by \prob.

The \prob problem models a very basic and natural scheduling scenario, and is thus very important in practice. However, it also plays a prominent theoretical role, most notably in the theory of scheduling algorithms. For instance, it is one of the first scheduling problems shown to be NP-hard, already included in Karp's famous initial list of 21 NP-hard problems~\cite{Karp72}. The algorithm by Lawler and Moore~\cite{LawlerMoore} which solves the problem in $O(Pn)$ or $O(Wn)$ time, where~$P$ and $W$ are the total processing times and weights of all jobs, is one of the first examples of pseudo-polynomial dynamic programming (see~\cite{HermelinMS2024} for recent improvements on this algorithm). Sahni~\cite{Sahni76} used \prob as one of the three first examples to illustrate the important concept of a fully polynomial time approximation scheme (FPTAS) in the area of scheduling. To that effect, several
generalizations of the \prob problem have been studied in the literature,
testing the limits to which these techniques can be applied~\cite{AdamuAdewumi2014}.

Another reason why \prob is such a prominent problem is that it is a natural generalization of two classical problems in combinatorial optimization. Indeed, the special case of \prob where all jobs have a common due date (\emph{i.e.} $d(x_1)= \cdots = d(x_n)=d$) translates directly to the dual version of \textsc{Knapsack}~\cite{Karp72}: In \prob our goal is to minimize the total weight of jobs that complete after $d$, where in \textsc{Knapsack} we wish to maximize the total weight of jobs that complete before $d$. (Here $d$ corresponds to the Knapsack size, the processing times correspond to item sizes, and the weights correspond to item values.) When in addition to $d(x_1)= \cdots = d(x_n)=d$, we also have $p(x)=w(x)$ for each job~$x$, the \prob problem becomes \textsc{Subset Sum}. The \probVariant problem, a generalization of \textsc{Subset Sum} and a special case of \prob, has recently received attention in the research community as well~\cite{BringmannFHSW22,Klein0R23,SchieberS23}.

\subsection{Parameterized complexity of \boldmath{\prob}}

In this paper we focus on the \prob problem from the perspective of parameterized complexity~\cite{DowneyFellows99,DowneyFeloows13}. Thus, we are interested to know whether there exists some algorithm solving \prob in $f(k) \cdot n^{O(1)}$ time, for some computable function~$f()$ and some problem-specific parameter~$k$. In parameterized complexity terminology this equates to asking whether \prob is \emph{fixed-parameter tractable} with respect to parameter~$k$. If we take $k$ to be the total weight of tardy jobs in an optimal schedule, then \prob is trivially fixed-parameter tractable by using the aforementioned pseudo-polynomial time algorithms that exist for the problem. In fact, these pseudo-polynomial time algorithms show that the \prob is only hard in the \emph{unbounded setting}, \emph{i.e.} the case where the processing times, weights, and due dates of the jobs may be super-polynomial in the number~$n$ of jobs. This is the case we focus on throughout the paper. 

In the unbounded setting, the most natural first step is to analyze \prob through the ``number of different numbers" lens suggested by Fellows \emph{et al.}~\cite{FellowsGR12}. In this framework, one considers problem instances with a small variety of numbers in their input. Three natural parameters arise in the context of the \prob problem: The number of different due dates~$d_{\#}=|\{d(x_1),\ldots,d(x_n)\}|$, the number of different processing times~$p_{\#}=|\{p(x_1),\ldots,p(x_n)\}|$, and the number of different weights~$w_{\#}=|\{w(x_1),\ldots,w(x_n)\}|$. Regarding parameter~$d_{\#}$, the situation is rather clear. Since \prob is essentially equivalent to the NP-hard \textsc{Knapsack} problem already for $d_{\#}=1$~\cite{Karp72}, there is no $f(k) \cdot n^{O(1)}$ time algorithm for the problem unless P=NP. 
\begin{theorem}[\cite{Karp72}]
\label{thm:karp}%
\prob is not fixed-parameter tractable with respect to~$d_{\#}$ unless P=NP.
\end{theorem}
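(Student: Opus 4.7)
The plan is to show that \prob is already NP-hard in the very restrictive regime $d_{\#}=1$, which immediately rules out any $f(d_{\#})\cdot n^{O(1)}$ algorithm unless P=NP, since specializing to $d_{\#}=1$ would then solve an NP-hard problem in polynomial time.

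First I would exhibit a straightforward polynomial-time reduction from (the decision version of) \textsc{Knapsack} to \prob restricted to instances with a single common due date. Given a \textsc{Knapsack} instance with item sizes $s_1,\dots,s_n$, values $v_1,\dots,v_n$, capacity $B$, and value target $V$, I build a job $x_i$ with $p(x_i)=s_i$, $w(x_i)=v_i$, and $d(x_i)=B$ for every item, and ask whether some schedule has total weight of tardy jobs at most $\sum_i v_i - V$. By construction the produced instance has $d_{\#}=1$.

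Next I would verify correctness. Because all due dates coincide, reordering the early jobs among themselves never changes which jobs are early; the set of early jobs in any schedule is exactly the initial block of the processing order whose total processing time is at most $B$. Conversely, any subset $S$ of items with $\sum_{i\in S} s_i\le B$ can be realized by scheduling precisely those jobs first, making exactly $S$ early. Hence an optimum schedule maximizes $\sum_{i\in S} w(x_i)$ subject to $\sum_{i\in S}p(x_i)\le B$, which is exactly the \textsc{Knapsack} objective; the yes-instances of the two problems coincide.

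Finally I would close the argument by observing that the reduction is polynomial and only produces instances with $d_{\#}=1$. Therefore any algorithm running in time $f(d_{\#})\cdot n^{O(1)}$ would, restricted to this class, solve \textsc{Knapsack} in polynomial time, forcing P=NP, as desired. I do not foresee any real technical obstacle here: the content is essentially Karp's original observation that common-due-date \prob is \textsc{Knapsack}, simply repackaged in parameterized language, and the only subtle point is the easy exchange argument that an optimal schedule with identical due dates may always be assumed to process the early jobs first in an arbitrary order.
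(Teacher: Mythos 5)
Your proof is correct and is essentially the same argument the paper relies on: the paper cites Karp for the observation that \prob with a single common due date is the dual of \textsc{Knapsack}, and you simply spell out that reduction and the resulting contradiction with any $f(d_{\#})\cdot n^{O(1)}$ algorithm. No substantive difference.
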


What about parameters~$p_{\#}$ and~$w_{\#}$? This question was first studied in~\cite{HermelinKPS21}. There it was shown the \prob is polynomial time solvable when either~$p_{\#}$ or~$w_{\#}$ are bounded by a constant. This is done by generalizing the algorithms of Moore~\cite{Moore68} and Peha~\cite{Peha95} for the cases of $w_{\#}=1$ or $p_{\#}=1$. Moreover, the authors in~\cite{HermelinKPS21} show that any instance of \prob can be translates to an integer linear program whose number of variables depends solely on $d_{\#}+p_{\#}$, $d_{\#}+w_{\#}$, or~$p_{\#}+w_{\#}$. Thus, using fast integer linear program solvers such as Lenstra's celebrated algorithm~\cite{lenstra1983integer}, they proved that \prob is fixed parameter tractable with respect to all possible combinations of parameters~$d_{\#}$, $p_{\#}$, and $w_{\#}$.
\begin{theorem}[\cite{HermelinKPS21}]
\label{thm:known}%
The \prob problem is solvable in polynomial-time when~$p_{\#}=O(1)$ or~$w_{\#}=O(1)$. Moreover, it is fixed-parameter tractable with respect to parameters~$d_{\#}+p_{\#}$, $d_{\#}+w_{\#}$, and~$p_{\#}+w_{\#}$.
\end{theorem}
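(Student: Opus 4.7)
The plan is two-pronged, addressing the polynomial-time claim and the FPT claim separately but with closely related tools. For the polynomial-time algorithms, I would extend Moore's algorithm via dynamic programming. Using the standard exchange argument that early jobs may be processed in EDD order, sort the jobs in EDD order and, for $w_{\#}=O(1)$, build a table indexed by a current job index together with a vector $(k_1,\ldots,k_{w_{\#}})$ recording how many early jobs of each weight class have been picked so far; the value stored is the minimum total processing time of those early jobs. A transition marks the current job early or late, and feasibility is checked against its due date. The table has $O(n^{w_{\#}+1})$ entries, which is polynomial for constant $w_{\#}$. The case $p_{\#}=O(1)$ is symmetric: index by counts per processing-time class and store the maximum early weight achievable.

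For the FPT claims, the plan is to formulate an ILP with at most \emph{parameter}-many integer variables and invoke Lenstra's algorithm. Take $p_{\#}+w_{\#}$ as the model case. Group the jobs by their type $(p,w)$; there are at most $p_{\#}w_{\#}$ types. Introduce $y_{p,w}\in\{0,1,\ldots,n_{p,w}\}$ counting how many jobs of type $(p,w)$ are scheduled early. A short exchange argument shows that for a fixed vector $(y_{p,w})$ one may assume the $y_{p,w}$ chosen early jobs of type $(p,w)$ are those with the latest due dates, since all jobs of a type contribute the same weight and a larger due date only relaxes a scheduling constraint. The objective $\sum_{(p,w)} (n_{p,w}-y_{p,w})\,w$ is then linear. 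Feasibility at each due-date threshold $D$ becomes
\[
\sum_{(p,w)} p \cdot \max\bigl(0,\;y_{p,w}-n_{p,w}^{>D}\bigr) \;\leq\; D,
\]
where $n_{p,w}^{>D}$ is the number of jobs of type $(p,w)$ with due date greater than $D$. Each $\max$ can be linearised by a continuous auxiliary $z_{p,w,D}\geq 0$ with $z_{p,w,D}\geq y_{p,w}-n_{p,w}^{>D}$, producing a mixed ILP with only $p_{\#}w_{\#}$ integer variables and polynomially many continuous variables and constraints, which Lenstra's algorithm solves in FPT time. For $d_{\#}+p_{\#}$ I would group by $(p,d)$ and, within each type, prefer the highest-weight jobs (giving a concave piecewise-linear objective linearised in the same manner, with $d_{\#}$ feasibility constraints of the form $\sum_{p,d'\leq d} p\,y_{p,d'}\leq d$); for $d_{\#}+w_{\#}$ I would group by $(w,d)$ and prefer the smallest-processing-time jobs within each type.

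The main obstacle I anticipate is twofold. First, the within-type optimality rules in each of the three parameterisations need dedicated (though short) exchange arguments tailored to which characteristic varies inside a type. Second, and more subtly, the resulting convex/concave piecewise-linear expressions in the objective or in the feasibility constraints are naturally indexed by individual jobs or by due dates, so a naive encoding would contribute polynomially many integer variables; to stay within the reach of Lenstra's theorem one must arrange that the auxiliary variables needed to flatten these pieces are \emph{continuous}, so that the integer dimension remains bounded by the parameter while the number of continuous auxiliaries and constraints is allowed to be polynomial in $n$. Once this is set up carefully, the three ILPs dispatch the three FPT cases uniformly.
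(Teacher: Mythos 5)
The paper does not actually prove this theorem---it is imported verbatim from~\cite{HermelinKPS21}---but the one-paragraph sketch given just before the theorem statement (a generalization of Moore's/Peha's algorithm for the $p_{\#}=O(1)$ and $w_{\#}=O(1)$ cases, and an ILP with parameter-many variables solved via Lenstra's algorithm for the three combined parameters) is precisely the approach you reconstruct, and your reconstruction is correct. The only cosmetic deviation is that you realize the ILP step as a \emph{mixed}-integer program with $p_{\#}w_{\#}$ (resp.\ $d_{\#}p_{\#}$, $d_{\#}w_{\#}$) integer variables and polynomially many continuous auxiliaries to flatten the per-type $\max$/piecewise-linear terms, whereas the cited paper phrases it as an ILP whose number of variables is bounded by the parameter; both dispatch via Lenstra (which handles a fixed number of integer variables with arbitrarily many continuous ones), so this is a valid instantiation rather than a genuinely different route.
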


Thus, both the aforementioned \textsc{Knapsack} and \textsc{Subset Sum} problems are both fixed-parameter tractable in the number of different numbers viewpoint. What about \prob? The parameterized complexity status of \prob parameterized by either~$p_{\#}$ or~$w_{\#}$ was left open in~\cite{HermelinKPS21}, and due to Theorem~\ref{thm:karp} and Theorem~\ref{thm:known}, these are the only two remaining cases. Thus, the main open problem in this context is 
\begin{quote}
``Is \prob fixed-parameter tractable with respect to either $p_{\#}$ or $w_{\#}$?"
\end{quote}

\subsection{Our contribution}

In this paper we resolve the open question above negatively, by showing that \prob is W[1]-hard with respect to either~$p_{\#}$ or $w_{\#}$. This means that unless the central hypothesis of parameterized complexity is false, \prob is neither fixed-parameter tractable with respect to~$p_{\#}$ nor with respect to~$w_{\#}$.
\begin{theorem}
\label{thm:main}%
\prob parameterized by either $p_{\#}$ or $w_{\#}$ is W[1]-hard.
\end{theorem}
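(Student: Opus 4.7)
The plan is to give a parameterized reduction from the W[1]-hard \mcc problem to \prob, producing an instance in which the number of distinct processing times $p_{\#}$ is bounded by a function of~$k$. A parallel reduction with the roles of weights and processing times swapped will handle the $w_{\#}$ parameter.

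Given an instance $(G, V_1, \ldots, V_k)$ of \mcc with $|V_i| = n$, the reduction consists of two kinds of gadgets. \emph{Selection gadgets}, one per color class~$V_i$, contain $n$ jobs (one per vertex in $V_i$) that share a common processing time~$\pi_i$; a tight budget induced by the due dates forces exactly one of these jobs to complete on time, and that job encodes the chosen vertex via its weight. \emph{Edge-verification gadgets} then enforce that for each color pair $(i,j)$, the chosen vertices are adjacent in~$G$. To coordinate the choices across the $\binom{k}{2}$ verification gadgets, weights follow a positional (mixed-radix) encoding with a sufficiently large base, so that the total early weight can be uniquely decomposed into a $k$-tuple of vertex choices.

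The main obstacle is keeping $p_{\#}$ bounded by $f(k)$: a naive reduction from \mcc would assign each vertex or edge its own processing time, which blows up $p_{\#}$. All combinatorial content must instead be pushed into the weights and due dates, with a bounded number of processing times used only to separate gadget types and color classes. Ensuring that the partial sums of completed jobs can align with the prescribed due dates \emph{only} when the chosen $k$-tuple spans a clique in~$G$ is the heart of the construction, and it requires careful scaling to prevent ``digit carries'' between different color-class coordinates from masquerading as spurious solutions. A second delicate point is synchronising the selection and verification gadgets in time without introducing new distinct processing times, which will likely force the use of filler/dummy jobs of one of the existing processing times and intermediate due dates that act as hard barriers.

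For W[1]-hardness with respect to $w_{\#}$, the same construction can be dualized: the positional vertex encoding is moved from weights into processing times, while only $O(f(k))$ distinct weights are used to label the different roles of jobs. Since \prob is determined by which jobs complete on time and by their weights, the correctness argument of the original reduction adapts with only routine adjustments to due dates, yielding the second W[1]-hardness result.
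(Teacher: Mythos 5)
Your high-level plan (reduce from \mcc, build per-color-class selection gadgets plus pairwise edge-verification gadgets, use a large base to avoid carries, and dualize processing times and weights for the $w_{\#}$ case) matches the paper's strategy in outline. However, the concrete mechanism you propose for the $p_{\#}$ reduction has a genuine gap.

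You suggest that the $V_i$ selection gadget contain $n$ jobs, one per vertex, all with a common processing time $\pi_i$, with a tight due-date budget forcing exactly one to be early, and with the \emph{weight} of that early job encoding the selected vertex. The problem is that in \prob, feasibility of any downstream job depends solely on the total \emph{processing time} of the early jobs preceding it; weights contribute only to the objective value, never to any due-date constraint. If all $n$ candidate jobs in the gadget have processing time $\pi_i$, then the total processing time consumed by the gadget is $\pi_i$ regardless of which job is chosen. Consequently every later job sees exactly the same start time no matter which vertex is ``selected,'' so the edge-verification gadgets cannot distinguish adjacent pairs from non-adjacent ones. Worse, an optimal schedule would simply always schedule the maximum-weight job early, so the gadget does not even model a free choice. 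The paper avoids this by encoding the selection through a \emph{multiplicity} of jobs with two different processing times per class (how many copies of $x_i$ versus $\neg x_i$ are early), so that the total processing time of the early vertex-selection jobs already carries the $k$-tuple $(n_1,\ldots,n_k)$ in designated base-$N$ digits, which the edge gadgets' due dates can then read off. Additionally, your proposal lacks the paper's two-sided counting idea (counting lexicographically $\geq$ edges and $\leq$ edges separately and checking the sum equals $|E|+\binom{k}{2}$), which is what converts adjacency verification into a weight threshold rather than a hard feasibility barrier; without something like this it is unclear how all $\binom{k}{2}$ pairs can be verified simultaneously with only $f(k)$ distinct processing times. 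Your sketch for $w_{\#}$ (moving the encoding into processing times) is closer to a workable mechanism, but it still inherits the missing counting structure.
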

\noindent Thus, Theorem~\ref{thm:main} together with Theorem~\ref{thm:karp} and Theorem~\ref{thm:known} provide a complete picture of the parameterized complexity landscape of \prob with respect to parameters~$\{p_{\#},w_{\#},d_{\#}\}$, and any of their combinations.

We prove Theorem~\ref{thm:main} using an elaborate application of the ``multicolored clique technique''~\cite{FellowsHRV09} which we discuss later on. The proof gives one of the first examples of a single machine scheduling problem which is hard by the number of different processing times or weights. The only other example we are aware of is in~\cite{HermelinMO} for a generalization of \prob involving release times and batches. 
Indeed, there are several open problems regarding the hardness of scheduling problems with a small number of different processing times or weights. The most notable example is arguably the \minmakespanhighmultiplicity problem, whose parameterized complexity status is open for parameter~$p_{\#}$ (despite the famous polynomial-time algorithm for the case of~$p_{\#}=O(1)$~\cite{GoemansR20}). Further, Mnich and van Bevern~\cite{MnichOpenProblems} list three scheduling with preemption problems that are also open for parameter~$p_{\#}$. We believe that ideas and techniques used in our proof can prove to be useful for some of these problems as well. 

Regarding exact complexity bounds for \prob, the best known algorithms for the problem with respect to $p_{\#}$ and $w_{\#}$ have running times of the form $O(n^{k+1})$ for either $k=p_{\#}$ or $k=w_{\#}$~\cite{HermelinKPS21}. How much can we improve on these algorithms? A slight adaptation of our proof which we discuss in the last part of paper gives an almost complete answer to this question. In particular, we can show that the above upper bounds are tight up to a factor of $O(\lg k)$, assuming the Exponential Time Hypotheses (ETH) of Impagliazzo and Paturi~\cite{IP2001}.
\begin{restatable}[]{corollary}{ethhard}
\label{cor:eth-hardness}
\prob cannot be solved in $n^{o(k/\lg k)}$ time, for either $k=p_{\#}$ or $k=w_{\#}$, unless ETH is false.
\end{restatable}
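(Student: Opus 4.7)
\medskip

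\noindent\textbf{Proof plan.} The plan is to extract an ETH lower bound from the same reduction that underlies Theorem~\ref{thm:main} by tracking how its parameter grows. The starting point is the standard ETH-based lower bound for \mcc: by the classical result of Chen, Huang, Kanj, and Xia, there is no algorithm solving \mcc in time $f(k)\cdot n^{o(k)}$ on inputs with $n$ vertices and clique parameter~$k$. This is where all ETH content enters; the remaining steps should reduce to a parameter-tracking exercise.

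Next, I would inspect the \mcc-to-\prob reduction used to prove Theorem~\ref{thm:main} and argue that it can be arranged so that the produced \prob instance has $N = n^{O(1)}$ jobs and satisfies $p_\# = O(k \lg k)$ (and, for the dual statement, $w_\# = O(k \lg k)$). The~$\lg k$ factor should arise from a binary-style encoding: within each of the $k$ color classes one needs $O(\lg k)$ distinct processing-time (resp.\ weight) values to encode which vertex is selected, while the remaining gadgets reuse common values across color classes. This is precisely the point where the main theorem's construction must be examined carefully, since the entire gap between the $n^{k+1}$ upper bound and the $n^{o(k/\lg k)}$ lower bound rests on this $\lg k$ factor.

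With those two ingredients in hand the conclusion is immediate. If \prob were solvable in $N^{o(k'/\lg k')}$ time for $k' = p_\#$, substituting $k' = \Theta(k \lg k)$ would yield
\[
N^{o(k'/\lg k')} \;=\; N^{o(k \lg k / \lg(k \lg k))} \;=\; N^{o(k)},
\]
which, together with $N = n^{O(1)}$, would give an $f(k)\cdot n^{o(k)}$ algorithm for \mcc and contradict ETH. The symmetric case $k = w_\#$ is handled identically, using whichever direction of the Theorem~\ref{thm:main} reduction places the tight bound on the number of distinct weights.

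The main obstacle is the second step: one must verify that the reduction behind Theorem~\ref{thm:main} can indeed be carried out with $p_\# = O(k \lg k)$ rather than, say, $O(k^2)$. This is a per-gadget accounting, possibly coupled with a mild redesign that shares processing-time values across color classes and encodes vertex selection in binary. Once that bound is confirmed, the final implication is a routine logarithmic calculation and the corollary drops out with no additional combinatorial work.
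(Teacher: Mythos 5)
Your proposal is not the approach taken in the paper, and it has a genuine gap at its central step.

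The paper does not stay with \mcc and tighten the parameter bound; instead it switches the \emph{source} problem to \textsc{Partitioned Subgraph Isomorphism} and invokes Marx's theorem that PSI cannot be solved in $f(k)\cdot n^{o(k/\lg k)}$ time where $k = |E(H)|$. The $\lg k$ in the corollary comes entirely from Marx's lower bound, not from the reduction. The reduction is then \emph{simplified} relative to Section~\ref{sec:PSharp}: one simply omits the $(i,j)$ large and small edge gadgets for every non-edge $\{i,j\}\notin E(H)$, and correspondingly strips the corresponding blocks from all integers and removes the matching terms from $L(i)$ and $S(i)$. Since $H$ has $\ell$ vertices and $k$ edges (with $\ell=O(k)$), this gives $p_\# = 2\ell+3k = O(k)$ — linear in $k$, not $O(k\lg k)$ — after which the lower bound transfers directly.

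The gap in your plan is at its ``main obstacle'': you claim the Theorem~\ref{thm:main} reduction ``can be arranged'' to have $p_\# = O(k\lg k)$ via a binary-style encoding of vertex selection within each color class, but (a) you never say how, and (b) the binary-encoding intuition attacks the wrong part of the construction. The vertex selection gadget already contributes only $2k$ distinct processing times; the source of the $\Theta(k^2)$ blowup is the $\binom{k}{2}$ edge gadgets, each of which occupies its own block of digits (its own distinct powers of $N$) precisely so that the gadgets cannot interfere with each other. Sharing those processing-time values across edge gadgets would destroy the block isolation on which Lemmas~\ref{lem:LargeEdge} and~\ref{lem:SmallEdge} depend, and no binary encoding of the $k$ color indices helps with this. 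Starting from \mcc, the reduction as given only yields an $n^{o(\sqrt{p_\#})}$ lower bound (as the paper explicitly notes), and your proposal does not supply the missing idea that closes the gap to $n^{o(p_\#/\lg p_\#)}$ — which is the entire content of the corollary. The final logarithmic substitution you perform is correct arithmetic, but it rests on an unestablished premise.
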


\subsection{Technical overview}

We next give a brief overview of the proof of Theorem~\ref{thm:main}. As the case of parameter~$p_{\#}$ and~$w_{\#}$ are rather similar, let us focus on parameter~$p_{\#}$. On a high level, our proof follows the standard ``multicolored clique technique" introduced in~\cite{FellowsHRV09}. In this framework, one designs a parameterized reduction (see Definition~\ref{def:ParamReduction}) from  \mcc, where we are given a $k$-partite graph~$G=(V_1 \uplus \cdots \uplus V_k,E)$, and we wish to determine whether~$G$ contains a clique that includes one vertex from each \emph{color class}~$V_i$ of $G$ (see Figure~\ref{fig:examplegraph}). Given an instance of \mcc, our goal is to construct in $f(k) \cdot n^{O(1)}$ time an equivalent instance of \prob such that $p_{\#}=g(k)$ for some computable functions $f()$ and $g()$.

\begin{figure}
    \centering
    \begin{tikzpicture}
        \tikzstyle{vertex}=[draw, fill, inner sep= 3pt, circle]
        \tikzstyle{squared-vertex}=[draw, fill, inner sep= 3pt]

        \node[squared-vertex, label=180:$v_1^1$] (v11) at (0,0) {};
        \node[vertex, label=180:$v_2^1$] (v12) at (0.5,-0.5) {};
        \node[vertex, label=180:$v_3^1$] (v13) at (1,-1) {};
        \node[vertex, label=180:$v_4^1$] (v14) at (1.5,-1.5) {};
        \draw[rotate=-45] ($0.5*(v12) + 0.5*(v13)+(0, -0.225)$) ellipse (2cm and 0.6cm);
        \node at  ($0.5*(v12) + 0.5*(v13)+(-0.9, -0.8)$) {\Large $V_1$};

        \begin{scope}[xshift=0.5cm]
        \node[vertex, label=0:$v_1^2$] (v21) at (5.5,0) {};
        \node[squared-vertex, label=0:$v_2^2$] (v22) at (5,-0.5) {};
        \node[vertex, label=0:$v_3^2$] (v23) at (4.5,-1) {};
        \node[vertex, label=0:$v_4^2$] (v24) at (4,-1.5) {};
        \draw[rotate=45] ($0.5*(v22) + 0.5*(v23)+(0, -0.225)$) ellipse (2cm and 0.6cm);
        \node at  ($0.5*(v22) + 0.5*(v23)+(0.9, -0.8)$) {\Large $V_2$};
        \end{scope}

        \begin{scope}[xshift=0.25cm]
        \node[vertex, label=90:$v_1^3$] (v31) at (1.7,2) {};
        \node[vertex, label=90:$v_2^3$] (v32) at (2.4,2) {};
        \node[squared-vertex, label=90:$v_3^3$] (v33) at (3.1,2) {};
        \node[vertex, label=90:$v_4^3$] (v34) at (3.8,2) {};
        \node (V3) at ($0.5*(v32) + 0.5*(v33) + (0, 1.2)$) {\Large $V_3$};
        \draw ($0.5*(v32) + 0.5*(v33)+(0, 0.28)$) ellipse (2cm and 0.6cm);
        \end{scope}

        \draw (v11) edge[line width =2pt,dashed] (v33);
        \draw (v11) edge[dashed] (v34);
        \draw (v12) edge[dashed] (v31);
        \draw (v13) edge[dashed] (v33);
        
        \draw (v11) edge[line width =2pt] (v21);
        \draw (v11) edge[dashed, line width =3pt] (v22);
        \draw (v13) edge[dashed] (v23);
        \draw (v14) edge[dashed] (v24);
        
        \draw (v21) edge[line width =2pt] (v34);
        \draw (v21) edge[line width =2pt] (v33);
        \draw (v22) edge[dashed, line width =2pt] (v33);
        \draw (v23) edge[dashed] (v32);
    \end{tikzpicture}
    \caption{An example nice 3-partite graph with $n=4$ (the size of each color class) and $m=4$ (the number of edges between any pair of color classes). The selected vertices are squared. Lexicographically larger or equal edges are dashed, while smaller or equal edges are in bold.}
    \label{fig:examplegraph}
\end{figure}
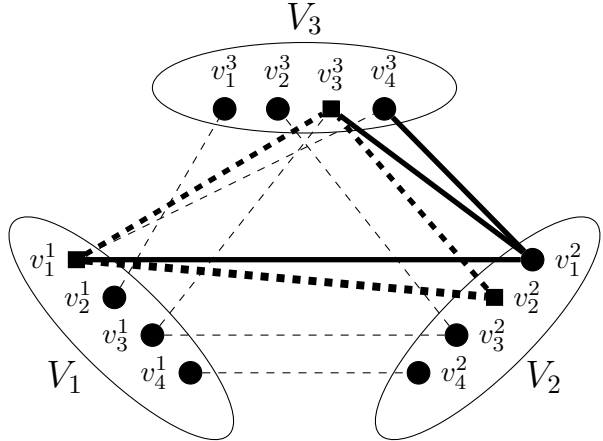

Our reduction essentially consists of three gadgets: One gadget for the vertices of $G$, and two gadgets for the edges of $G$. The gadget for the vertices of $G$, which we refer to as the \emph{vertex selection gadget}, consists of a set of jobs whose role is to encode the selection of a single vertex from each color class of~$G$. Since we may assume that each color class of~$G$ includes exactly $n$ vertices, we essentially need to encode the selection of $k$ integers $n_1,\ldots,n_k \in \{1,\ldots,n\}$. The crux is that we need to do this using jobs that have only~$f(k)$ different processing times.

The first edge gadget, called the \emph{large edge gadget}, consists of a set of jobs whose role is to count the number of edges that are lexicographically larger or equal to any selected pair $(n_i,n_j) \in \{1,\ldots,n\}^2$. The second edge gadget, referred to as the \emph{small edge gadget}, counts all lexicographically smaller or equal edges. In this way, if the total number of edges counted is $|E|+\binom{k}{2}$, then we know that the vertices indexed by $n_1,\ldots,n_k \in \{1,\ldots,n\}$ form a clique in~$G$. If the total number of counted edges is smaller, then $G$ contains no clique with $k$ vertices. Again, we need to ensure that the jobs in both gadgets have $f(k)$ different processing times.

To ensure all jobs constructed have a small variety of different processing times, we make heavy use of the fact that the processing times (and weights and due dates) can be rather large. Thus, we choose some polynomially large $N$, and use integers in the range of~$\{0,\ldots,N^{f(k)}-1\}$ for some function~$f()$. In this way, considering all integers in their base~$N$ representation, allows us to use the different \emph{digits} in the representation to encode various numerical values such as the integers $n_1,\ldots,n_k \in \{1,\ldots,n\}$. We partition each integer in~$\{0,\ldots,N^{f(k)}-1\}$ into \emph{blocks} of $m+2$ consecutive digits. Each digit in each block has a function that will overall allow us to use the strategy discussed above, and selecting a sufficiently large $N$ ensures that no overflow can occur between adjacent digits. The devil, of course, is in the details. 

\subsection{Roadmap}

The rest of the paper is organized as follows. In Section~\ref{sec:preliminaries} we briefly review all preliminary results that are necessary for proving our main result, \emph{i.e.} Theorem~\ref{thm:main}. Section~\ref{sec:PSharp} then contains the proof of Theorem~\ref{thm:main} for parameter~$p_{\#}$, which is the main technical part of the part. In Section~\ref{sec:WSharp} we discuss how to adapt the proof of Section~\ref{sec:PSharp} to parameter~$p_{\#}$. Finally, we discuss our ETH-based lower bounds in Section~\ref{sec:ETH}.

\section{Preliminaries}
\label{sec:preliminaries}

Throughout the paper we will use $<$ to denote the lexicographical order between ordered pairs of integers. Thus, 
$$
(i,j)<(i_0,j_0) \quad\iff\quad (i < i_0) \text{ or }  (i=i_0 \text{ and } j < j_0) 
$$
for any pair of integers $(i,j)$ and $(i_0,j_0)$.

\subsection{Parameterized complexity primer}

In parameterized complexity, an instance of a \emph{parameterized problem} $\Psi$ is a pair $(x,k) \in \{0,1\}^* \times \mathbb{N}$, where $x$ encodes the ``combinatorial part'' of the input (\emph{e.g.} a graph, a set of integers, ...), and $k$ is a numerical value representing the \emph{parameter}. Thus, when $\Psi$ is the \prob problem parameterized by $p_{\#}$, the string~$x$ encodes all processing times, weights, and due dates of the jobs, and $k$ equals the total number of different processing times in the input. The main tool we use for proving Theorem~\ref{thm:main} is that of a parameterized reduction:
\begin{definition}[\cite{DowneyFellows99}]
\label{def:ParamReduction}
A \emph{parameterized reduction} from a parameterized problem $\Psi_1$ to a parameterized problem $\Psi_2$ is an algorithm that receives as input an instance $(x,k)$ of $\Psi_1$ and outputs in $f(k) \cdot |x|^{O(1)}$ time for some computable function~$f()$ an instance $(y,\ell)$ of $\Psi_2$ such that
\begin{itemize}
\item $(x,k)$ is a yes-instance of $\Psi_1$ iff $(y,\ell)$ is a yes-instance of $\Psi_2$.
\item $\ell \leq g(k)$ for some computable function $g()$.
\end{itemize}
\end{definition}

A parameterized problem is said to be \emph{fixed-parameter tractable} (and in the class FPT) if there is an algorithm solving it in $f(k) \cdot |x|^{O(1)}$ time. The main hardness class in parameterized complexity is W[1]. Thus, the main working assumption in parameterized complexity is that FPT$\neq$W[1]. A parameterized problem is \emph{W[1]-hard} if there is a parameterized reduction from any problem in W[1] to~$\Psi$. If $\Psi_1$ is parameterized problem which is known to be W[1]-hard, and there exists a parameterized reduction from $\Psi_1$ to another parameterized problem~$\Psi_2$, then~$\Psi_2$ is also W[1]-hard~\cite{DowneyFellows99}.

\subsection{The multicolored clique problem}

The source W[1]-hard problem in our parameterized reduction used for proving Theorem~\ref{thm:main} is the \mcc problem. 
\begin{definition}
Given a $k$-partite graph $G=(V_1 \uplus \cdots \uplus V_k,E)$, the \mcc\ problem asks to determine whether $G$ contains a subset of $k$ pairwise adjacent vertices (\emph{i.e.}, a clique of size $k$).
\end{definition}

For a given a $k$-partite graph $G=(V_1 \uplus \cdots \uplus V_k,E)$, we let $E_{i,j}$ denote the set of edges between any vertex in $V_i$ and any vertex in $V_j$, for all $1 \leq i < j \leq k$. We say that a $k$-partite graph $G=(V_1\cup \cdots \cup V_k)$ is \emph{nice} if $|V_1|= \cdots = |V_k|$ and $|E_{1,2}| = \cdots = |E_{k-1,k}|$. 
\begin{theorem}[\cite{FellowsHRV09}]
\label{thm:mcc}
\mcc\ is W[1]-hard when parameterized by~$k$, even if the input graph is nice.    
\end{theorem}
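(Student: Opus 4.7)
My plan is to reduce from the ordinary (not-necessarily-nice) \mcc problem, which is already W[1]-hard with parameter~$k$, and pad an arbitrary instance $G=(V_1\uplus\cdots\uplus V_k,E)$ into a nice instance $G'$ that has exactly the same multicolored $k$-cliques. The reduction will be a two-stage padding: first equalize the edge counts across all $\binom{k}{2}$ bipartite sides, then equalize the vertex counts across the $k$ color classes. The central design requirement is that all padding must be performed without introducing any new $k$-clique.

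For the edge-padding stage, I set $m=\max_{i<j}|E_{i,j}|$ and, for each pair $i<j$ with $|E_{i,j}|<m$, introduce $m-|E_{i,j}|$ disjoint ``dumbbells'' across the $V_i$--$V_j$ side: each dumbbell consists of a fresh vertex $u\in V_i$, a fresh vertex $v\in V_j$, and the single edge $\{u,v\}$. After this stage every $|E_{i,j}|$ equals $m$. For the vertex-padding stage, I let $n=\max_i|V_i|$ in the current graph and, for each~$i$, insert $n-|V_i|$ brand-new isolated vertices into $V_i$. The resulting graph $G'$ satisfies $|V_1|=\cdots=|V_k|=n$ and $|E_{1,2}|=\cdots=|E_{k-1,k}|=m$, so it is nice. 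Since the construction runs in polynomial time and leaves the parameter at~$k$, it is a parameterized reduction.

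For correctness, every multicolored $k$-clique of $G$ is trivially still a multicolored $k$-clique of $G'$. Conversely, for $k\geq 3$, no added vertex can lie in any $k$-clique of $G'$: isolated padding vertices have degree~$0$, and each endpoint of a dumbbell has degree exactly~$1$, whereas every vertex of a $k$-clique must have degree at least $k-1\geq 2$. Hence every multicolored $k$-clique of $G'$ uses only original vertices and is a clique of $G$. The boundary cases $k\in\{1,2\}$ are polynomial and need no reduction.

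The main obstacle is preserving this low-degree invariant throughout the construction: if two dumbbells were allowed to share an endpoint, that endpoint's degree would grow and small cliques could slip in, so every piece of padding must be kept strictly disjoint. The only other thing to verify is that the two stages interact correctly, which they do because isolated-vertex padding leaves every $|E_{i,j}|$ untouched; hence the edge equality established in the first stage is not disturbed by the second.
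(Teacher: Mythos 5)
Your padding reduction is correct, and it matches the (implicit) approach the paper has in mind: the paper cites the result and only remarks in passing (when adapting to \textsc{Partitioned Subgraph Isomorphism} in Section~\ref{sec:ETH}) that niceness ``can be achieved by adding isolated vertices and edges between isolated vertices,'' which is precisely your dumbbell-plus-isolated-vertex construction; the degree argument (padding vertices have degree at most~$1$, while any vertex of a $k$-clique with $k\geq 3$ has degree at least~$2$) is the standard way to see that no new cliques are introduced.
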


Given a nice $k$-partite graph $G=(V_1 \uplus \cdots \uplus V_k,E)$, we refer to each $V_i \in \{V_1,\ldots,V_k\}$ as a \emph{color class} of $G$. We write $V_i=\{v^i_1,\ldots,v^i_n\}$ to denote vertices in $V_i$ for each $1 \leq i \leq k$, and $E_{i,j}=\{e^{i,j}_1,\ldots,e^{i,j}_m\}$ to denote the edges in $E_{i,j}$ for each $1 \leq i < j \leq k$. When considering a specific set of edges $E_{i,j}$, we will often use $\ell_i \in \{1,\ldots,n\}$ and $\ell_j \in \{1,\ldots,n\}$ to respectively denote the index of the vertex in $V_i$ and the index of the vertex of $V_j$ in the $\ell$'th edge of $E_{i,j}$. That is, $e^{i,j}_\ell=\{v^i_{\ell_i},v^j_{\ell_j}\}$.

\subsection{EDD schedules}

In the \prob problem it is frequently convenient to work with what we refer to as an EDD\footnote{EDD here is an acronym for ``Earliest Due Date".} schedule.
\begin{definition}
A schedule $\Pi$ for a set $\{x_1,\ldots,x_n\}$ of jobs is \emph{EDD} if all early jobs in $\Pi$ are scheduled in before all tardy jobs, and the order among early jobs is non-decreasing in due dates. Thus, if $\Pi(x_i) < \Pi(x_j)$, then either $x_j$ is tardy, or both jobs are early and $d(x_i) \leq d(x_j)$.
\end{definition}

The reason EDD schedules are popular when working with the \prob problem is that we can always assume that there exists an optimal schedule which is EDD. The following lemma is by now folklore (see \emph{e.g.}~\cite{AdamuAdewumi2014}), and can easily be proven by an exchange argument which swaps early jobs that do not satisfy the EDD property in a given optimal schedule.
\begin{lemma}
\label{lem:EDD}%
Any instance of \prob has an optimal EDD schedule.     
\end{lemma}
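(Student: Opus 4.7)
The plan is to start with an arbitrary optimal schedule $\Pi^{*}$ for the given instance of \prob and transform it into an EDD schedule via a two-phase exchange argument that preserves the set of tardy jobs (and hence the objective value).

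In the first phase, I would push all tardy jobs to the end of the schedule. Suppose $\Pi^{*}$ schedules a tardy job $x$ immediately before an early job $y$. Swapping the two leaves the completion time of every other job unchanged, strictly decreases $C(y)$ (so $y$ stays early), and weakly increases $C(x)$ (so $x$ stays tardy). Iterating adjacent swaps of this form until no such pair exists yields a schedule in which every early job precedes every tardy job, while preserving the set of tardy jobs exactly.

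In the second phase, I would sort the early jobs by non-decreasing due date via adjacent transpositions. Consider two consecutive early jobs with $\Pi(x_i) + 1 = \Pi(x_j)$ and $d(x_i) > d(x_j)$, and let $R = \sum_{\Pi(z) < \Pi(x_i)} p(z)$. Since $x_j$ is early in the current schedule, we have $R + p(x_i) + p(x_j) \leq d(x_j)$. After swapping $x_i$ and $x_j$, the new completion times are $R + p(x_j) \leq d(x_j)$ for $x_j$ and $R + p(x_j) + p(x_i) \leq d(x_j) < d(x_i)$ for $x_i$, so both remain early and no other job's completion time is affected. A standard bubble-sort termination argument (the number of inversions among early jobs strictly decreases) shows that finitely many such swaps produce an ordering of early jobs consistent with EDD.

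Combining the two phases transforms $\Pi^{*}$ into an EDD schedule with the same set of tardy jobs, which is therefore optimal as well. There is no real obstacle here; the only thing to verify carefully is that each local exchange preserves the early/tardy classification of both involved jobs and leaves all other jobs untouched, both of which follow directly from the completion time calculations above.
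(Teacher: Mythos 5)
Your proof is correct and matches the standard adjacent-exchange argument that the paper invokes (the paper labels the lemma as folklore and only sketches "an exchange argument which swaps early jobs that do not satisfy the EDD property"). Your two-phase presentation — first bubbling tardy jobs to the end, then bubble-sorting early jobs by due date — fills in the details cleanly, and both local-swap calculations and the termination arguments are sound.
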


Thus, throughout our reduction from \mcc, we can restrict our attention to EDD schedules only.
Given an EDD schedule $\Pi_0$ for a job set $\{x_1,\ldots,x_n\}$, we say that $\Pi$ is \emph{an extension of $\Pi_0$ to the set of jobs $\{y_1,\ldots,y_m\}$} if $\Pi$ is an EDD schedule for $\{x_1, \ldots, x_n, y_1,\ldots,y_m\}$ which schedules early all jobs that are scheduled early in $\Pi_0$. We write~$P(\Pi)$ and~$W(\Pi)$ to  respectively denote the total processing time and weight of all early jobs in a given EDD schedule~$\Pi$.

\section{Parameter \boldmath{$p_{\#}$}}
\label{sec:PSharp}%

In the following section we present a proof of Theorem~\ref{thm:main} for parameter~$p_{\#}$. As mentioned above, the proof consists of a parameterized reduction from \mcc\ parameterized by~$k$ to \prob parameterized by $p_{\#}$. We use $G=(V=V_1 \uplus \cdots \uplus V_k,E)$ to denote an arbitrary nice $k$-partite graph given as an instance of \mcc, with $n=|V_1|=\cdots=|V_k|$ and $m=|E_{1,2}|=\cdots=|E_{k-1,k}|$. Before discussing our construction in full detail, we review the terminology that we will use throughout for handling large integers.

\subsection{Digits and blocks}

Let $N$ be a polynomially-bounded integer that is chosen to be sufficiently larger than the overall number of jobs in our construction ($N=O(kn+k^2m)$ is enough). This number will appear frequently in the processing times, weights, and due dates of the jobs in our construction. In particular, it is convenient to view each integer in our construction in its base~$N$ representation: Each integer will be in the range of $[0,1,\ldots,N^{k+2\binom{k}{2}\cdot(m+2)+1}-1]$, and so we can view each integer as a string of length $k+2\binom{k}{2} \cdot (m+2)+1$ over the alphabet~$\{0,\ldots,N-1\}$. When viewed as such, we will refer to each letter of the string as a \emph{digit}. 

Furthermore, we will conceptually partition each integer into \emph{blocks} of consecutive digits as follows (see Example~\ref{ex:blocks}): The least most significant digit is a block within itself which we refer to as the \emph{counting block}. Following this, there are $\binom{k}{2}$ blocks which we refer to as the \emph{small blocks}, consisting of $m+2$ digits each, where the first (least significant) block corresponds to the color class pair $(V_1,V_2)$, the second corresponds to $(V_1,V_3)$, and so forth. Following the small blocks are $\binom{k}{2}$ blocks which we dub the \emph{large blocks}, which again consist of $m+2$ digits each, and are ordered similarly to the left blocks. The final block is the \emph{vertex selection block} which consists of the $k$ most significant digits of the given integer.

\begin{example}
\label{ex:blocks}
As example, the following is the partitioning of integer 0:
$$
\underbrace{\overbrace{0 \cdots 0}^k}_{\substack{\text{vertex}\\ \text{selection}\\ \text{block}}} \underbrace{|\overbrace{0 \cdots \cdots 0}^{m+2}| \quad \cdots \quad | \overbrace{0 \cdots \cdots 0}^{m+2}|}_{\binom{k}{2} \text{ large blocks}} \underbrace{|\overbrace{0 \cdots \cdots 0}^{m+2}| \quad \cdots \quad | \overbrace{0 \cdots \cdots 0}^{m+2}|}_{\binom{k}{2}\text{ small blocks}} \underbrace{\overbrace{0}^1}_{\substack{\text{counting}\\\text{block}}}
$$
The large and small blocks are ordered in increasing lexicographic order of $(i,j)$, so the $(1,2)$ small block is the first block following the counting block. In each block we order the digits from least significant to most significant, so the first digit in the $(1,2)$ small block is the second least significant digit overall.
\end{example}

Let $g:\{(i,j) \mid 1 \le i < j \le k\} \to \{0,\ldots,\binom{k}{2}-1\}$ denote the lexicographic ordering function, that is $g(i,j) > g(i_0,j_0)$ iff $(i,j) > (i_0,j_0)$ for all $1 \le i < j \le k$. Furthermore, let $G(i,j)=(m+2)\cdot g(i,j) +1$ for all $1 \leq i < j \leq k$. Similarly, let $f:\{(i,j) \mid 1 \le i < j \le k\} \to \{\binom{k}{2},\ldots,2\cdot\binom{k}{2}-1\}$ denote the function defined by $f(i,j)=\binom{k}{2}+g(i,j)$, and let $F(i,j)=(m+2)\cdot f(i,j) +1$. We will use the following constants in our construction:
\begin{itemize}
\item $X_i := N^{(m+2) \cdot 2\binom{k}{2} +i}$ for $i \in \{1,\ldots,k\}$,
\item $Y_{i,j} := N^{F(i,j)+m+1}$  for $i < j\in \{1,\ldots,k\}$, and
\item $Z_{i,j} := N^{G(i,j)+m+1}$ for $i < j\in \{1,\ldots,k\}$.
\end{itemize}
Thus, $X_i$ corresponds to the $i$'th digit in the vertex selection block, $Y_{i,j}$ corresponds to the last digit in the $(i,j)$ large block, and $Z_{i,j}$ corresponds to the last digit in the $(i,j)$ small block.

\subsection{Vertex selection gadget}
\label{sec:vertex-selection}

The role of the vertex selection gadget is to encode the selection of $k$ vertices, one from each color class $V_i$ of $G$. In constructing the vertex selection jobs, we will use the following two values associated with each $i\in\{1,\ldots,k\}$:
\begin{itemize}
\item $L(i) = \sum_{j= 1}^{i-1} N^{F(j,i)} + \sum_{j = i+1}^k N^{F(i,j) +1}$.
\item $S(i) = \sum_{j = 1}^{i-1} N^{G(j,i)} + \sum_{j = i+1}^k N^{G(i,j) +1}$.
\end{itemize}
Thus, adding $L(i)$ to an integer corresponds to adding a 1 to the first digit of every $(j,i)$ large block with $j <i$, and a 1 to the second digit of any $(i,j)$ large block with $j > i$. Adding~$S(i)$ corresponds to adding a 1 to the same digits in the small blocks.

Let $1 \leq i \leq k$, and consider the color class~$V_i$ of~$G$. The $V_i$ vertex selection gadget is constructed as follows. Let $P^V_i$ denote the following value:
$$
P^V_i \,\,=\,\, n \cdot \sum _{j > i} X_j \,\,=\,\, n \cdot \sum _{j > i} N^{(m+2) \cdot 2\binom{k}{2} +j}.
$$
Thus, $P^V_i$ has $n$ as its $j$'th most significant digit for $j < i$, and 0 in all of its other digits. We construct $n-1$ copies of the job pair $\{x_i,\neg x_i\}$ with the following characteristics:
\begin{itemize}
\item $p(x_i)=w(x_i)=X_i + L(i)$. 
\item $p(\neg x_i)=w(\neg x_i)=X_i + S(i)$.
\item $d(x_i)  = d(\neg x_i) = P^V_{i-1} + N^{(m+2) \cdot 2\binom{k}{2} }$ (where $P^V_0=n \cdot \sum_i X_i$).
\end{itemize}

In addition to these $n-1$ copies of~$\{x_i,\neg x_i\}$, we construct a single job $x^*_i$ with similar processing time and due date as $x_i$, but with significantly larger weight:
\begin{itemize}
\item $p(x^*_i)=p(x_i)$ and $d(x^*_i) = d(x_i)$.
\item $w(x^*_i)=(n+1) \cdot X_i + L(i)$.
\end{itemize}
The jobs~$x_i^*, x_i$, and $\neg x_i$ are called \emph{$V_i$ vertex selection jobs}.

Overall, we have $(2n-1) \cdot k$ vertex selection jobs that together have only $2k$ different processing times, $3k$ different weights, and $k$ different due dates. The vertex selection jobs are constructed in a way so that any schedule with sufficiently large weight of early jobs will schedule $n$ early jobs from $\{x^*_i,x_i,\neg x_i\}$ for each $1 \leq i \leq n$. Due to the large weight of $x^*_i$, job~$x_i^*$ will always be scheduled early, while the number of early jobs $x_i$ and $\neg x_i$ will be used to encode an integer $n_i \in \{1,\ldots,n\}$ corresponding to vertex $v^i_{n_i} \in V_i$. 

\begin{lemma}
\label{lem:VertexSelectForward}
Let $n_1,\ldots,n_k \in \{1,\ldots,n\}$. There exists a schedule $\Pi=\Pi(n_1,\ldots,n_k)$ for the vertex selection jobs such that for each $i \in \{1,\ldots,k\}$ precisely $n_i$ jobs from $\{x^*_i,x_i\}$ and $n-n_i$ copies of $\neg x_i$ are early in $\Pi$ for each $1 \leq i \leq k$     
\end{lemma}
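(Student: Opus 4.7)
The plan is to construct $\Pi$ explicitly by specifying an early set that matches the lemma's quota, arranging it in EDD order, and then checking that no early job misses its deadline. Because this is the forward direction, no counting arguments about optimality are needed; only feasibility has to be verified.

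First I would settle the ordering of due dates. Since $P^V_{i-1} = n \sum_{j \geq i} X_j$ is strictly decreasing in $i$ (losing the $X_i$ term each step) and the additive constant $N^{(m+2)\cdot 2\binom{k}{2}}$ is common to all $V_i$, we get $d(V_1) > d(V_2) > \cdots > d(V_k)$, so any EDD schedule lists early jobs in the blocks $V_k, V_{k-1}, \ldots, V_1$. I would then define $\Pi$ as follows: for each $i \in \{1,\ldots,k\}$, put $x_i^*$ early together with $n_i - 1$ copies of $x_i$ and $n - n_i$ copies of $\neg x_i$ (this is possible since $1 \leq n_i \leq n$ and we have $n-1$ copies of each of $x_i,\neg x_i$ available). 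The remaining $n-1$ jobs per color class are scheduled as tardy after all early jobs. Within the early portion, jobs are processed in EDD order (first the $n$ early $V_k$-jobs in any internal order, then the $n$ early $V_{k-1}$-jobs, etc.), which is legal since all jobs of a color class share the same due date.

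The remaining task is to verify that every early job completes by its due date. Since all $V_i$ jobs share the due date $d(V_i)$, it suffices to show that the total processing time of early jobs from $V_k, V_{k-1}, \ldots, V_i$ is at most $d(V_i)$. This total equals
\[
\sum_{j=i}^{k} \Bigl[ n_j \cdot (X_j + L(j)) + (n - n_j) \cdot (X_j + S(j)) \Bigr] \;=\; n \sum_{j=i}^{k} X_j \;+\; R_i \;=\; P^V_{i-1} + R_i,
\]
where $R_i := \sum_{j=i}^{k} \bigl[ n_j L(j) + (n - n_j) S(j) \bigr]$ is a ``residue'' term. Each summand in $L(j)$ and $S(j)$ is of the form $N^{a}$ with $a \leq (m+2) \cdot 2\binom{k}{2} - m$, and $R_i$ has fewer than $2 k^2 n$ such summands in total. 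Hence $R_i < 2 k^2 n \cdot N^{(m+2)\cdot 2\binom{k}{2} - m}$, which is less than $N^{(m+2)\cdot 2\binom{k}{2}}$ for the chosen $N = \Omega(k n + k^2 m)$. Consequently the completion time at the end of $V_i$'s early block is bounded by $P^V_{i-1} + N^{(m+2)\cdot 2\binom{k}{2}} = d(V_i)$, so all early jobs meet their deadlines and $\Pi$ is the required EDD schedule.

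The only real obstacle is ensuring that the residue $R_i$ does not ``spill over'' into the digit of $X_i$; this is precisely what the generous block spacing (each block has $m+2$ digits, with an extra high digit kept empty) and the polynomial size of $N$ are designed to prevent, so the check reduces to a one-line digit-position comparison.
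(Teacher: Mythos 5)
Your construction and verification are essentially the same as the paper's: you schedule $x_i^*$, $n_i-1$ copies of $x_i$, and $n-n_i$ copies of $\neg x_i$ for each $i$ in order $i=k,\ldots,1$, and verify the early jobs meet their common due date $d(V_i)=P^V_{i-1}+N^{(m+2)\cdot 2\binom{k}{2}}$ by bounding the residue $R_i=\sum_{j\ge i}[n_j L(j)+(n-n_j)S(j)]$ below $N^{(m+2)\cdot 2\binom{k}{2}}$. You spell out the EDD ordering and the residue bound a bit more explicitly than the paper (which simply asserts the inequality), but the route is the same; one small note is that the crude count ``$2k^2 n$ summands'' can be replaced by the tighter observation that distinct summands of $R_i$ occupy distinct digit positions with coefficients at most $n<N$, which avoids any concern about whether $N$ exceeds $2k^2 n$.
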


\begin{proof}
For $i=k,\ldots,1$, we proceed as follows: We schedule job $x^*_i$, followed by $n_i-1$ copies of~$x_i$ and $(n-n_i)$ copies of $\neg x_i$. By construction, the total processing time of all scheduled jobs from the $V_i$ vertex selection gadget is 
$$
n \cdot X_i + n_i \cdot L(i) + (n-n_i) \cdot S(i), 
$$
and since
\begin{align*}
\sum _{j \ge i} \left(n \cdot X_j +  n_j \cdot L(j) +  (n-n_j) \cdot S(j) \right)\,\,&=\\ 
P_{i-1}^V + \sum_{j \ge i} n_j \cdot L(j) + \sum_{j \ge i} (n-n_j) \cdot S(j) \,\,&\leq\\ 
P_{i-1}^V + N^{(m+2) \cdot 2\binom{k}{2}} \,\, &= \,\,d(x^*_i) \,\,= \,\,d(x_i)\,\,=\,\,d(\neg x_i),
\end{align*}
all scheduled jobs are early.
\end{proof}

Throughout the remainder of the proof we will use $\Pi=\Pi(n_1,\ldots,n_k)$ to denote the schedule that schedules $x_i^*$, exactly $n_i-1$ jobs~$x_i$, and $n-n_i$ jobs $\neg x_i$ early. Let $W_V$ denote the value
$$
W_V = 2n \cdot \sum_i X_j. 
$$
Then the following corollary follows directly from Lemma~\ref{lem:VertexSelectForward}:
\begin{corollary}
\label{cor:VertexSelection}
Let $\Pi=\Pi(n_1,\ldots,n_k)$ for some $n_1,\ldots,n_k \in \{1,\ldots,n\}$. Then 
\begin{itemize}
\item[$(i)$] $P(\Pi)=P_V+ \sum_i n_i \cdot L(i) + \sum_i (n-n_i) \cdot S(i)$. 
\item[$(ii)$] $W(\Pi)=W_V+ \sum_i n_i \cdot L(i) + \sum_i (n-n_i) \cdot S(i)$.
\end{itemize} 
\end{corollary}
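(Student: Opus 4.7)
The proof plan is a direct arithmetic verification starting from the explicit structure of $\Pi(n_1,\ldots,n_k)$ given in Lemma~\ref{lem:VertexSelectForward}. First, I would fix a color class $V_i$ and compute the contribution of its early jobs to $P(\Pi)$ and $W(\Pi)$ separately, then sum over $i\in\{1,\ldots,k\}$.

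For part (i), by Lemma~\ref{lem:VertexSelectForward} the schedule contains exactly one copy of $x_i^*$, exactly $n_i-1$ copies of $x_i$, and exactly $n-n_i$ copies of $\neg x_i$ as early jobs from the $V_i$ gadget. Since $p(x_i^*)=p(x_i)=X_i+L(i)$ and $p(\neg x_i)=X_i+S(i)$, the contribution of class $V_i$ is
\[
n_i\cdot\bigl(X_i+L(i)\bigr)+(n-n_i)\cdot\bigl(X_i+S(i)\bigr)\;=\;n\cdot X_i + n_i\cdot L(i) + (n-n_i)\cdot S(i).
\]
Summing over $i$ and using $P_V := n\sum_i X_i$ yields the claimed formula for $P(\Pi)$.

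For part (ii), the only difference in weights is that $w(x_i^*)=(n+1)X_i+L(i)$ contributes an extra $n\cdot X_i$ compared with a regular $x_i$. Replacing $p$ by $w$ in the above calculation, the contribution of class $V_i$ is
\[
n_i\cdot\bigl(X_i+L(i)\bigr)+(n-n_i)\cdot\bigl(X_i+S(i)\bigr)+n\cdot X_i\;=\;2n\cdot X_i + n_i\cdot L(i) + (n-n_i)\cdot S(i),
\]
and summing over $i$, with $W_V=2n\sum_i X_i$, gives the desired formula for $W(\Pi)$.

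There is no real obstacle here: the statement is essentially a bookkeeping identity unpacking the schedule constructed in the proof of Lemma~\ref{lem:VertexSelectForward}. The only point that warrants a sentence of commentary is why the coefficient of $X_i$ jumps from $n$ in the processing-time sum to $2n$ in the weight sum, which is explained by the one extra $n\cdot X_i$ term coming from the inflated weight of the mandatory job $x_i^*$.
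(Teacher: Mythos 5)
Your proof is correct and is essentially the calculation the paper intends: the paper states that the corollary ``follows directly from Lemma~\ref{lem:VertexSelectForward}'' without spelling it out, and in fact the per-class processing-time contribution $n\cdot X_i + n_i\cdot L(i) + (n-n_i)\cdot S(i)$ that you derive already appears verbatim inside the proof of that lemma. Your treatment of the weight side, isolating the extra $n\cdot X_i$ contributed by $w(x_i^*)$, is exactly the bookkeeping the authors had in mind, so this matches the paper's approach.
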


\begin{example}
\label{ex:VertexSelection}%
Consider the 3-partite graph in Figure~\ref{fig:examplegraph}, where vertex $v^i_i$ is selected for each color class $V_i$. Then the total processing time of all early vertex selection jobs in this example is:
$$
444 | \underbrace{0000 23|}_{\substack{(2,3)\\ \text{large}}} | \underbrace{ 0000 13|}_{\substack{(1,3)\\ \text{large}}} | \underbrace{0000 12|}_{\substack{(1,2)\\ \text{large}}} | \underbrace{0000 21|}_{\substack{(2,3)\\ \text{small}}} | \underbrace{0000 31|}_{\substack{(1,3)\\ \text{small}}} | \underbrace{0000 32|}_{\substack{(1,2)\\ \text{small}}} | 0
$$
The total weight of all early vertex selection jobs is identical, except that the vertex selection block equals `$888$' instead of `$444$'.
\end{example}

As mentioned above, the vertex selection jobs are constructed in a way so that any schedule~$\Pi$ for these jobs with sufficiently large weight of early jobs will schedule precisely $n$ early jobs from~$\{x^*_i,x_i,\neg x_i\}$ for each $1 \leq i \leq k$. This is formally proven in the following lemma:

\begin{lemma}
\label{lem:VertexSelectBackward}%
Let $\Pi$ be an EDD schedule for the vertex selection jobs with $W(\Pi) \geq W_V$. Then $\Pi=\Pi(n_1,\ldots,n_k)$ for some $n_1,\ldots,n_k \in \{1,\ldots,n\}$.
\end{lemma}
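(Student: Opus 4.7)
The plan is to first use the EDD order and the due-date constraints to bound the number of early jobs from each class, and then use the weight lower bound $W(\Pi)\ge W_V$ to pin down the exact structure. Let $\alpha_i,\beta_i\in\{0,\ldots,n-1\}$ denote the numbers of early copies of $x_i$ and $\neg x_i$ in $\Pi$ respectively, let $b_i\in\{0,1\}$ indicate whether $x_i^*$ is early, and set $c_i=\alpha_i+\beta_i+b_i$. Since $d_i=n\sum_{j\ge i}X_j+N^{(m+2)\cdot 2\binom{k}{2}}$ is strictly decreasing in $i$ and $\Pi$ is EDD, the early jobs appear class by class in the order $V_k,V_{k-1},\ldots,V_1$. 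Moreover, the largest digit used by any $L(i)$ or $S(i)$ has index at most $(m+2)\cdot 2\binom{k}{2}-m$, so $L(i),S(i)\le k\cdot N^{(m+2)\cdot 2\binom{k}{2}-m}$, and the total contribution from all $L(i)$ and $S(i)$ terms over the at most $2nk$ early vertex-selection jobs is strictly less than $N^{(m+2)\cdot 2\binom{k}{2}}$ for the chosen $N$.

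I would then prove by reverse induction on $i=k,k-1,\ldots,1$ that $c_i\le n$. For the inductive step, the total processing time of all early jobs from classes $V_i,\ldots,V_k$ must be at most $d_i=n\sum_{j\ge i}X_j+N^{(m+2)\cdot 2\binom{k}{2}}$; isolating the class-$V_i$ contribution, using $c_{i'}\le n$ for $i'>i$ to bound $\sum_{i'>i}c_{i'}X_{i'}\le n\sum_{j>i}X_j$, and dropping the nonnegative correction terms, one obtains $c_iX_i\le nX_i+N^{(m+2)\cdot 2\binom{k}{2}}$. Since $X_i\ge N^{(m+2)\cdot 2\binom{k}{2}+1}>N^{(m+2)\cdot 2\binom{k}{2}}$ and $c_i\in\mathbb{Z}$, this forces $c_i\le n$.

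With $c_i\le n$ in hand, the weight constraint finishes the argument. A direct calculation gives $W(\Pi)=\sum_i(c_i+nb_i)X_i+C$, where $C\ge 0$ aggregates the $L$ and $S$ weight contributions and satisfies $C<N^{(m+2)\cdot 2\binom{k}{2}}<X_1$ by the bound above. Since $c_i+nb_i\le 2n$ for every $i$, if $c_i+nb_i\le 2n-1$ for some index $i$, then $W(\Pi)\le 2n\sum_j X_j-X_i+C<2n\sum_j X_j=W_V$, contradicting $W(\Pi)\ge W_V$. Hence $c_i+nb_i=2n$ for every $i$, which forces $b_i=1$ and $c_i=n$; setting $n_i=\alpha_i+1\in\{1,\ldots,n\}$ then identifies $\Pi$ with $\Pi(n_1,\ldots,n_k)$.

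The step I expect to require the most care is the digit-level bookkeeping supporting the induction: one must verify that the aggregate $L$- and $S$-correction across all classes is strictly below the slack digit $N^{(m+2)\cdot 2\binom{k}{2}}$ of the due dates, so that carries from the large and small blocks cannot artificially admit an extra early vertex-selection job in some class and break either the bound $c_i\le n$ or the subsequent equality $c_i+nb_i=2n$.
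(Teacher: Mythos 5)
There is a genuine gap in your inductive claim that $c_i \le n$ for every $i$. In the inductive step you write that you ``use $c_{i'} \le n$ for $i' > i$ to bound $\sum_{i' > i} c_{i'} X_{i'} \le n \sum_{j > i} X_j$'' and then conclude $c_i X_i \le n X_i + N^{(m+2)\cdot 2\binom{k}{2}}$. But this inequality points the wrong way. The due-date constraint gives
\[
c_i X_i + \sum_{j>i} c_j X_j + (\text{nonnegative $L,S$ corrections}) \;\le\; n \sum_{j\ge i} X_j + N^{(m+2)\cdot 2\binom{k}{2}},
\]
so after subtracting $\sum_{j>i}c_jX_j$ you are left with $c_i X_i \le n X_i + \bigl(n\sum_{j>i}X_j - \sum_{j>i}c_jX_j\bigr) + N^{(m+2)\cdot 2\binom{k}{2}}$. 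To kill the bracketed term you need a \emph{lower} bound $\sum_{j>i}c_jX_j \ge n\sum_{j>i}X_j$, i.e.\ that the later classes are exactly full; the upper bound $c_j\le n$ supplied by your induction is useless here. Indeed, the unconditional claim ``$c_i\le n$'' is simply false: an EDD schedule that schedules no $V_k$ job early can schedule all $2n-1$ jobs of $V_{k-1}$ early, since their total processing time is about $(2n-1)X_{k-1}$ while $d_{k-1}=n(X_{k-1}+X_k)+N^{(m+2)\cdot2\binom{k}{2}}$ and $X_k=N\cdot X_{k-1}$, so there is enormous slack. Under-filling a later class lets an earlier class overfill.

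Consequently your step 4 cannot be detached from step 3: without $c_i\le n$ you only know $c_i+nb_i\le 3n-1$, and then $W(\Pi)\ge W_V$ alone does not force $c_i+nb_i=2n$. The fix is to thread the weight hypothesis through the induction itself, which is exactly what the paper's proof does: it inducts on the stronger statement that if the restriction $\Pi_i$ to classes $V_i,\ldots,V_k$ satisfies $W(\Pi_i)\ge W^V_{i-1}=2n\sum_{j\ge i}X_j$, then $x_i^*$ is early and exactly $n-1$ jobs from $\{x_i,\neg x_i\}$ are early. The weight bound forces $c_j=n$ and $b_j=1$ as you descend, and only then does the due-date budget in the next step shrink to the single digit $N^{(m+2)\cdot2\binom{k}{2}}$ that caps $c_i$ at $n$. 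Your weight calculation $W(\Pi)=\sum_i(c_i+nb_i)X_i+C$ with $0\le C<X_1$ is correct, and would indeed finish the argument once $c_i\le n$ is established; the issue is purely that $c_i\le n$ cannot be obtained from the due dates alone.
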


\begin{proof}
Let $\Pi_i$ denote the restriction of $\Pi$ to the $V_i$ vertex selection jobs for~$j >i$. To prove the lemma, we prove the following stronger statement by backward induction on~$i$: If $w(\Pi_i) \geq W^V_{i-1}$ where $W^V_i := 2n \cdot \sum_{j> i} X_i$, then $\Pi_i=\Pi(n_i,\ldots,n_k)$ for some $n_1,\ldots,n_k \in \{1,\ldots,n\}$. Equivalently, we show for each $i= k,\ldots,1$ that if $w(\Pi_i) \geq W^V_{i-1}$ then $\Pi_i$ schedules $x^*_i$ and precisely $n-1$ jobs from the job pair~$\{x_i,\neg x_i\}$.

Let $i=k$, and suppose $w(\Pi_k) \geq W^V_{k-1}$. First note that since $N^{(m+2) \cdot 2\binom{k}{2}} < X_k$, we have $d(x^*_k) = d(x_k) = d(\neg x_k) < (n + 1) \cdot X_k$, and so at most $n$ jobs from~$\{x^*_k,x_k,\neg x_k\}$ can be scheduled early in $\Pi_k$. Since the total weight of any set of $n$ vertex selection jobs that does not include $x^*_k$ is less than $(n+1) \cdot  X_k$, it must be that $x^*_k$ is scheduled early in $\Pi_k$ as otherwise $W(\Pi_k) < (n+1) \cdot \sum X_k < W^V_{k-1}$. Moreover, as the total weight of all vertex selection jobs outside the $V_k$ vertex selection job is less than $X_k$, it must be that $n-1$ jobs from $\{x_k,\neg x_k\}$ are scheduled early in $\Pi$, as otherwise $W(\Pi_k) < 2n \cdot X_k = W^V_{k-1}$.

Now let $1 \leq i < k$. By induction, we have $\Pi_{i+1}=\Pi(n_k,\ldots,n_{i+1})$ for some $n_{i+1},\ldots,n_k \in \{1,\ldots,n\}$. We refer to the jobs jobs of any $V_j$ vertex selection gadget with~$j \geq i$ as the \emph{remaining jobs}. As $\Pi_i$ is an EDD schedule, it schedules jobs from the $V_i$ vertex selection gadget starting at time~$P(\Pi_{i+1})$. Since $P(\Pi_{i+1}) > P^V_i$ by construction, and as $X_i > N^{(m+2) \cdot 2\binom{k}{2}}$, we have
$$
P(\Pi_{i+1}) + (n+1) \cdot X_i \,\,>\,\, P^V_i + (n+1) \cdot X_i \,\,>\,\, d(x^*_i) = d(x_i) = d(\neg x_i).
$$
Thus, at most $n$ jobs from~$\{x^*_i,x_i,\neg x_i\}$ are scheduled early in $\Pi_i$. Moreover, since the total weight of any set of $n$ remaining jobs that does not include $x^*_i$ is less than $(n+1) \cdot \sum_i X_k$, it must be that $x^*_i$ is scheduled early in $\Pi_i$ as otherwise $W(\Pi_i) < (n+1) \cdot X_i + 2n \cdot \sum_{j>i} X_j < W^V_{j-1}$. Moreover, as the total weight of all remaining jobs outside the $V_i$ vertex selection gadget is less than $X_i$, it must be that $n-1$ jobs from $\{x_i,\neg x_i\}$ are scheduled early in $\Pi$, as otherwise $W(\Pi_i) < 2n \cdot \sum_{j\ge i} X_j = W^V_{i-1}$. 
\end{proof}

\subsection{Large edge gadget}
\label{sec:large-edge}

We next describe the large edge gadget. The role of this gadget is to ``count'' all edges that are lexicographically larger or equal to pairs of selected vertices. This is done by constructing a pair of jobs $\{y^{i,j}_\ell, \neg y^{i,j}_\ell\}$ for each edge $e^{i,j}_\ell$ of $G$, along with some additional filler jobs.

Let $1 \leq i <j \leq k$. The $(i,j)$ large edge gadget is constructed as follows. First we define~$P^L_{i,j}$ to be the following value: 
$$
P^L_{i,j} \,\,\, =\,\,\, \sum_{(i_0,j_0) > (i,j)} \left(m \cdot Y_{i_0,j_0} + n \cdot N^{F(i_0,j_0)+1} +  n \cdot N^{F(i_0,j_0)} \right).   
$$
Thus, the two first digits of the $(i,j)$ large block in $P^L_{i,j}$ equal $n$, the last digit of this block equals $m$, and all other digits equal 0. Let $\ell \in \{1,\ldots,m\}$, and suppose that the $\ell$'th edge between~$V_i$ and~$V_j$ is the edge $e^{i,j}_\ell =\{v^i_{\ell_i},v^j_{\ell_j}\}$ for some $\ell_i,\ell_j \in \{1,\ldots,n\}$. We construct two jobs~$y^{i,j}_\ell$ and~$\neg y^{i,j}_\ell$ corresponding to~$e^{i,j}_\ell$ with the following characteristics:
\begin{itemize}
\item $p(y^{i,j}_\ell)=Y_{i,j}$ and $w(y^{i,j}_\ell) = Y_{i,j} \,/\, N^\ell + 1$.
\item $p(\neg y^{i,j}_\ell)=Y_{i,j}$ and $w(\neg y^{i,j}_\ell) = Y_{i,j} \,/\, N^\ell$.
\item $d(y^{i,j}_\ell)= P_V + P^L_{i,j} +  \ell \cdot Y_{i,j} + \ell_i \cdot N^{F(i,j)+1} + \ell_j \cdot N^{F(i,j)} + N^{F(i,j)-1}$.
\item $d(\neg y^{i,j}_\ell)= P_V + P^L_{i,j} +  \ell \cdot Y_{i,j} + n \cdot N^{F(i,j)+1} + n \cdot N^{F(i,j)} + N^{F(i,j)-1}$.
\end{itemize}
Observe that both jobs have the same processing time, which is equal throughout for jobs corresponding to other edges of $E_{i.j}$. Also note that the weight of~$y^{i,j}_\ell$ is slightly larger than the weight of~$\neg y^{i,j}_\ell$, while the due date of of~$\neg y^{i,j}_\ell$ is significantly larger than the due date of $y^{i,j}_\ell$. 

We will also need to add \emph{filler jobs} that will help us control the total processing times of all early jobs selected from the $(i,j)$ large edge gadget. We construct $n$ copies of the the job pair~$\{f^{i,j}_0,f^{i,j}_1\}$ which have the following characteristics:
\begin{itemize}
\item $p(f^{i,j}_0)=w(f^{i,j}_0)=N^{F(i,j)}$.
\item $p(f^{i,j}_1)=w(f^{i,j}_1)=N^{F(i,j)+1}$.
\item $d(f^{i,j}_0) = d(f^{i,j}_1) = P_V + P^L_{i,j} + m \cdot Y_{i,j} + n \cdot N^{F(i,j)+1} + n \cdot N^{F(i,j)} + N^{F(i,j)-1}$.
\end{itemize}

Thus, altogether, the large edge gadget consists of the job pair $\{y^{i,j}_\ell,\neg y^{i,j}_\ell\}$ for $\ell \in \{1,\ldots,m\}$ and~$n$ copies of the job pair $\{f^{i,j}_0,f^{i,j}_1\}$, for each $1 \leq i < j \leq k$. Note that the large edge jobs have $3\binom{k}{2}$ different processing times in total. We next prove a lemma regarding the structure of certain schedules for the vertex selection and large edge job. This structure is what allows us to count all edges that are lexicographically larger or equal any selected pair~$(n_i,n_j)$. Let~$\Pi_V$ be a schedule for the vertex selection jobs. We say that $\Pi$ is an \emph{optimal extension} of $\Pi_V$ to the set of large edge jobs if all jobs that are early in $\Pi_V$ are also early in $\Pi$, and there is no other such schedule with a larger total weight of early jobs. 

\begin{restatable}[]{lemma}{lemLargeEdge}
\label{lem:LargeEdge}%
Let~$\Pi_V=\Pi (n_1,\ldots,n_k)$ be a schedule for the vertex selection jobs for some $n_1,\ldots,n_k \in \{1,\ldots,n\}$, and let $\Pi$ be an optimal extension of $\Pi_V$ to the set of large edge jobs. Then the following properties hold for each~$1 \leq i < j \leq k$: 
\begin{itemize}
\item[(a)] The total processing time $P$ of all vertex selection jobs and all $(i_0,j_0)$ large jobs for $(i_0,j_0) > (i,j)$ which are early in $\Pi$ satisfies
$$
P \,\, \geq \,\, P_V + P^L_{i,j} + n_i \cdot N^{F(i,j)+1} + n_j \cdot N^{F(i,j)} 
$$
and 
$$
P \,\, \leq \,\, P_V + P^L_{i,j} + n_i \cdot N^{F(i,j)+1} + n_j \cdot N^{F(i,j)} + N^{F(i,j)-1}.
$$
\item[(b)] For each $\ell \in \{1,\ldots,m\}$ we have that either job $y^{i,j}_\ell$ or job $\neg y^{i,j}_\ell$ is early in $\Pi$, but not both. Job~$y^{i,j}_\ell$ is early iff $(n_i,n_j) \leq (\ell_i,\ell_j)$, 
where $e^{i,j}_\ell=\{v^i_{\ell_i},v^j_{\ell_j}\}$ is the $\ell$'th edge in~$E_{i,j}$.
\item[(c)] Precisely $n- n_i$ copies of job $f^{i,j}_1$ and $n- n_j$ copies of job $f^{i,j}_0$ are scheduled early in~$\Pi$.
\end{itemize}
\end{restatable}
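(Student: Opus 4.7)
The proof will proceed by reverse induction on the lexicographic order of $(i,j)$, starting from the maximum pair $(k-1,k)$. This order matches the EDD order of the $(i,j)$ large jobs within $\Pi$: because $Y_{i,j}=N^{F(i,j)+m+1}$ dwarfs all other terms appearing in the due dates of large edge jobs, a direct comparison shows that the vertex selection jobs are scheduled first in $\Pi$, followed by the $(i_0,j_0)$ large jobs in \emph{decreasing} lexicographic order of~$(i_0,j_0)$. Thus the processing time $P$ of part~(a) is exactly the time at which $\Pi$ is about to start scheduling the $(i,j)$ large jobs, and the inductive hypothesis on~(b) and~(c) for all $(i_0,j_0)>(i,j)$ tells us precisely which jobs have already been placed by that point.

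For part~(a), I would read $P$ digit-by-digit in base~$N$. By Corollary~\ref{cor:VertexSelection}, the vertex selection contribution places $n_a$ in position $F(a,b)+1$ and $n_b$ in position $F(a,b)$ of every $(a,b)$ large block, together with analogous entries in the small blocks. The inductive hypothesis for each $(a,b)>(i,j)$ then adds $m\,Y_{a,b}$ (one edge job per pair), $(n-n_a)N^{F(a,b)+1}$ (copies of $f^{a,b}_1$) and $(n-n_b)N^{F(a,b)}$ (copies of $f^{a,b}_0$); summing with the vertex-selection digits in the same block produces exactly $mY_{a,b}+nN^{F(a,b)+1}+nN^{F(a,b)}$, i.e.\ the $(a,b)$-term of $P^L_{i,j}$. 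In the $(i,j)$ block itself only the vertex selection has contributed so far, yielding the $n_iN^{F(i,j)+1}+n_jN^{F(i,j)}$ summand. The remaining vertex-selection contributions live in blocks below the $(i,j)$ large block, at positions bounded by $F(i,j)-m-1$; a digit count using that each digit is at most~$n$ and that $N$ is chosen much larger than $kn$ shows this residual is strictly less than $N^{F(i,j)-1}$, which is the slack appearing in the upper bound of~(a).

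For parts~(b) and~(c), a direct comparison of due dates gives the EDD order
\[
d(y^{i,j}_1)<d(\neg y^{i,j}_1)<d(y^{i,j}_2)<\cdots<d(y^{i,j}_m)<d(\neg y^{i,j}_m)<d(f^{i,j}_0)=d(f^{i,j}_1),
\]
again because $Y_{i,j}$ dominates all other terms. Let $P_0$ be the processing time of part~(a); write it as $P_V+P^L_{i,j}+n_iN^{F(i,j)+1}+n_jN^{F(i,j)}+\rho$ with $0\le\rho<N^{F(i,j)-1}$. A direct calculation shows that the slack $d(y^{i,j}_\ell)-P_0$ accommodates one additional job of length $Y_{i,j}$ precisely when $(n_i,n_j)\le(\ell_i,\ell_j)$; when the inequality fails, $\neg y^{i,j}_\ell$ (whose due date absorbs the full $nN^{F(i,j)+1}+nN^{F(i,j)}$) must be used instead. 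Both jobs of a single pair cannot be placed early because $2Y_{i,j}$ already exceeds the next due date gap. Optimality of $\Pi$ forces exactly one of each pair to be early, and the $+1$ in $w(y^{i,j}_\ell)$, which is the only weight contribution to the counting block, strictly prefers $y^{i,j}_\ell$ over $\neg y^{i,j}_\ell$ whenever the former fits; this gives part~(b). The same slack accounting after all $m$ edge jobs are scheduled shows that the interval up to the common filler due date leaves exactly $(n-n_i)N^{F(i,j)+1}+(n-n_j)N^{F(i,j)}$ room, and the filler weights $w(f^{i,j}_1)=N^{F(i,j)+1}$ and $w(f^{i,j}_0)=N^{F(i,j)}$ make it strictly optimal to pack as many filler copies as possible, giving part~(c).

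The main technical obstacle is careful bookkeeping. Two points deserve special care: first, ensuring that all residual digits below the $(i,j)$ large block really do sum to less than $N^{F(i,j)-1}$, which is a digit-counting exercise valid only because $N$ was chosen large enough relative to $kn+k^2m$; second, ruling out alternative schedules in which one deliberately omits some $y^{i,j}_\ell$ or $\neg y^{i,j}_\ell$ to make room for more filler jobs, or vice versa. The choice of the specific powers $N^{F(i,j)+m+1-\ell}$ for the edge-job weights, strictly greater than the filler weight $N^{F(i,j)+1}$ for $\ell<m$ and exactly one larger for $\ell=m$, is precisely what precludes such trades and forces the characterization described.
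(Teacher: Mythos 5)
Your proposal follows the same high-level strategy as the paper's proof: reverse lexicographic induction on $(i,j)$, the observation that EDD orders the blocks in decreasing lex order, digit-by-digit bookkeeping for part (a), and a weight-domination argument for parts (b) and (c). You also correctly flag the two main technical dangers (carry control, and the possibility of sacrificing an edge pair for extra fillers). Two small inaccuracies: $d(\neg y^{i,j}_m)$ is \emph{equal} to $d(f^{i,j}_0)=d(f^{i,j}_1)$, not strictly smaller, so the EDD order within a block is determined only up to a tiebreak that the argument must choose consistently; and your slack for pair $\ell$ should be measured from $P_0+(\ell-1)Y_{i,j}$, not from $P_0$.

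There is, however, a genuine gap at exactly the point you claim to have dispatched. You assert that the exponent choice ``precludes such trades,'' but the weight comparison does not in fact close the case $\ell=m$. Note $w(\neg y^{i,j}_m)=Y_{i,j}/N^m=N^{F(i,j)+1}=w(f^{i,j}_1)$, with only the $+1$ on $w(y^{i,j}_m)$ separating the pair from a filler. By part (a) the filler due date leaves room for at most $n-n_i$ copies of $f^{i,j}_1$ (and $n-n_j$ of $f^{i,j}_0$) once $m$ edge jobs of length $Y_{i,j}$ have been scheduled, but if the pair $\{y^{i,j}_m,\neg y^{i,j}_m\}$ is dropped entirely, the extra $Y_{i,j}=N^{F(i,j)+m+1}$ of slack dwarfs $nN^{F(i,j)+1}$ and all $n$ copies of each filler then fit. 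The net weight change is $n_i N^{F(i,j)+1}+n_j N^{F(i,j)} - N^{F(i,j)+1} - O(m)$, which is strictly positive because $n_j\geq 1$. So a single weight-domination inequality of the form ``$w(y^{i,j}_m)$ exceeds everything else remaining'' (which is also what the paper invokes for $\ell>1$) does not rule this out; the fillers alone weigh $n(N^{F(i,j)+1}+N^{F(i,j)})>w(y^{i,j}_m)$. You would need a genuinely different argument for the last pair of each block (and, more basically, for $m=1$ where even the base case $\ell=1$ fails to dominate the fillers) — for instance one that also uses the EDD position of the fillers and the due dates of later blocks, or exhibits a different construction. As it stands, the claim that the powers ``force the characterization'' is unsubstantiated precisely where it matters most.
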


\begin{proof}
We prove that~$\Pi$ satisfies the properties of lemma by backward induction on $(i,j)$, starting with the base case of $(i,j)=(k-1,k)$. 

Consider first property $(a)$: Observe that $P= P(\Pi_V)$ in this case, and that $P^L_{k-1,k}=0$. Now, according to Corollary~\ref{cor:VertexSelection}, we have that
\begin{align*}
P = P(\Pi_V) \,&= \,  P_V + \sum_i n_i \cdot L(i) + \sum_i (n-n_i) \cdot S(i)\\
&\geq\, P_V + P^L_{k-1,k} + n_{k-1} \cdot N^{F(k-1,k)+1} + n_k \cdot N^{F(k-1,k)}. 
\end{align*}
On the other hand, as $P(\Pi_0)$ is maximized when $n_1,\ldots,n_{k-2}=n$, we have  
\begin{align*}
P \,\,&\leq\,\,  P_V + n_{k-1} \cdot L(k-1) + (n-n_{k-1}) \cdot S(k-1)+ n_k \cdot L(k) + (n-n_k) \cdot S(k) + n \cdot \sum^{k-2}_{i=1} L(i) \\
&\leq\,\, P_V + n_{k-1} \cdot N^{F(k-1,k)+1} + n_k \cdot N^{F(k-1,k)} + N^{F(k-1,k)-1}.
\end{align*}
Thus, property $(a)$ holds for $(i,j)=(k-1,k)$.

We next prove property $(b)$ for $(i,j)=(k-1,k)$ by induction on $\ell$. Let $\ell =1$. Note that as both $w(y^{k-1,k}_1)$ and~$w(\neg y^{k-1,k}_1) \geq N^{F(k-1,k)+m+1}$ are larger than the total weight of all other large edge jobs, one of these jobs must be early in~$\Pi$, as otherwise $\Pi$ is not an optimal extension of $\Pi_V$. Moreover, as $\Pi$ is an EDD schedule, it schedules either job~$y^{k-1,k}_1$ or job~$\neg y^{k-1,k}_1$ at time~$P$. By property~$(a)$ we have that
\begin{align*}
P+p(y^{k-1,k}_1) + p(\neg y^{k-1,k}_1) \,\, &> \,\, P_V + p(y^{k-1,k}_1) + p(\neg y^{k-1,k}_1) \\
&=\,\, P_V +2 \cdot Y_{k-1,k}\\ 
& > \,\, P_V + Y_{k-1,k} + n \cdot N^{F(k-1,k)+1} + n \cdot N^{F(k-1,k)} + N^{F(k-1,k)-1}\\ 
&=\,\, d (\neg y^{k-1,k}_1) \,\, > \,\, d(y^{k-1,k}_1), 
\end{align*}
and so at most one of $y^{k-1,k}_1$ and~$\neg y^{k-1,k}_1$ can be early. On the other hand, we have 
\begin{align*}
P + p(\neg y^{k-1,k}_1) \,\,&\leq \,\,  P_V + n_{k-1} \cdot N^{F(k-1,k)+1} + n_k \cdot N^{F(k-1,k)} + N^{F(k-1,k)-1} + Y_{k-1,k} \\
&\leq\,\, P_V + n \cdot N^{F(k-1,k)+1} + n \cdot N^{F(k-1,k)} + N^{F(k-1,k)-1} + Y_{k-1,k}  \,\, = \,\, d(\neg y^{k-1,k}_1),
\end{align*}
and so at least $\neg y^{k-1,k}_1$ can be scheduled early.

However, as $w(y^{k-1,k}_1) > w(\neg y^{k-1,k}_1)$ and $p(y^{k-1, k}_1) = p(\neg y^{k-1, k}_1)$, an optimal extension of~$\Pi_V$ would schedule job~$y^{k-1,k}_1$ early if possible. If $(n_{k-1},n_k) \leq (\ell_{k-1},\ell_{k})$ then
\begin{align*}
P + p(y^{k-1,k}_1) \,\, &\leq \,\, P_V + n_{k-1} \cdot N^{F(k-1,k)+1} + n_k \cdot N^{F(k-1,k)} + N^{F(k-1,k)-1} + Y_{k-1,k} \\
&\leq \,\, P_V + \ell_{k-1} \cdot N^{F(k-1,k)+1} + \ell_k \cdot N^{F(k-1,k)} + N^{F(k-1,k)-1} + Y_{k-1,k}  = d(y^{k-1,k}_1),
\end{align*}
and so~$y^{k-1,k}_1$ is indeed early in~$\Pi$. If $(n_{k-1},n_k) > (\ell_{k-1},\ell_k)$, then
\begin{align*}
P + p(y^{k-1,k}_1) \,\, &\geq \,\, P_V + n_{k-1} \cdot N^{F(k-1,k)+1} + n_k \cdot N^{F(k-1,k)} +Y_{k-1,k} \\
&>\,\, P_V + n_{k-1} \cdot N^{F(k-1,k)+1} + (n_k-1) \cdot N^{F(k-1,k)} + N^{F(k-1,k)-1} +Y_{k-1,k}\\
&\geq\,\, P_V + \ell_{k-1} \cdot N^{F(k-1,k)+1} + \ell_k \cdot N^{F(k-1,k)} + N^{F(k-1,k)-1} + Y_{k-1,k}= d(y^{k-1,k}_1), 
\end{align*}
and so~$y^{k-1,k}_1$ is not early in~$\Pi$. Thus, property~$(b)$ holds for $\ell=1$. The inductive step for~$\ell >1$ follows by the exact same arguments while observing that the weight of either~$y^{i,j}_\ell$ or $\neg y^{i,j}_\ell$ is larger than the total weight of all remaining jobs (\emph{i.e.} all large edge jobs except jobs in $\{y^{k-1,k}_1,\neg y^{k-1,k}_1, \ldots, y^{k-1,k}_{\ell-1},\neg y^{k-1,k}_{\ell-1}\}$), and that the due date of both of these jobs contains the term $\ell \cdot Y_{i,j}$.

Finally, let us consider property~$(c)$. As~$\Pi$ is an EDD schedule, it first schedules all early vertex selection jobs, followed by all early $(k-1,k)$ large edge jobs, all early filler jobs of type~$f^{k-1,k}_1$, and then all early filler jobs of type~$f^{k-1,k}_0$. Due to property~$(b)$, the total processing time of all early jobs in~$\Pi$ prior to the filler jobs is~$P^*=P+m \cdot Y_{k-1,k}$. According to property~$(a)$, we can schedule $n-n_{k-1}$ copies of the filler job~$f^{k-1,k}_1$ since
\begin{align*}
P^*+(n-n_{k-1}) \cdot p(f^{k-1,k}_1) \,\, &\leq \,\, P_V \!+\! n \!\cdot\! N^{F(k-1,k)+1} \!+\! n_{k} \!\cdot\! N^{F(k-1,k)} \!+\! N^{F(k-1,k)-1} \!+\! m \!\cdot\! Y_{k-1,k} \\
&\leq\,\,  P_V \!+\! n \!\cdot\! N^{F(k-1,k)+1} \!+\! n \!\cdot\! N^{F(k-1,k)} \!+\! N^{F(k-1,k)-1} \!+\! m \!\cdot\! Y_{k-1,k}\\ 
&=\,\, d(f^{k-1,k}_1).    
\end{align*}
Scheduling more than $n-n_k$ copies is not possible since $P^* + (n-n_k+1) \cdot p(f^{k-1,k}_1)$ is at least 
$$
P_V + (n+1) \cdot N^{F(k-1,k)+1} + n_{k} \cdot N^{F(k-1,k)} +m \cdot Y_{k-1,k}> d(f^{k-1,k}_1). 
$$
It follows that~$\Pi$ schedules precisely $n-n_{k-1}$ copies of ~$f^{k-1,k}_1$. A similar argument shows that~$\Pi$ also schedules precisely $n-n_k$ copies of ~$f^{k-1,k}_0$.

We have thus shown that the lemma holds for the base case of~$(i,j)=(k-1,k)$. The inductive step for property~$(a)$ follows by observing that, by induction, the total processing time of all early $(i_0,j_0)$ large edge jobs in~$\Pi$ with $(i_0,j_0) > (i,j)$ is exactly 
$$
P^L_{i,j} \,\,\,- \sum_{(i_0, j_0) > (i,j)} \bigl(n_{i_0} \cdot N^{F(i_0,j_0) + 1} + n_{j_0} \cdot N^{F(i_0,j_0)}\bigr).
$$
Properties $(b)$ and~$(c)$ then follow using the same arguments as above.
\end{proof}

Using Lemma~\ref{lem:LargeEdge}, we can again derive the total processing time and weight of all early jobs in any optimal EDD schedule $\Pi$ for vertex selection jobs and the large edge jobs. Let~$W^L$ be the following value: 
$$
W_L = \sum_{(i,j)}  \left(\sum_\ell Y_{i,j} \,/\, N^{\ell} + n \cdot N^{F(i,j)+1} +  n \cdot N^{F(i,j)} \right). 
$$
Define $P_L=P^L_{0,0}$. Moreover, for~$1 \leq i < j \leq k$ and~$n_i,n_j \in \{1,\ldots,n\}$, define $m^L_{i,j}(n_i,n_j)$ to be the total number of edges in $E_{i,j}$ that are lexicographically larger or equal to~$(n_i,n_j)$. That is, the total number of edges $e^{i,j}_\ell=(v^i_{\ell_i},v^j_{\ell_j})\in E_{i,j}$ with $(n_i,n_j) \leq (\ell_i,\ell_j)$. Then the following holds:

\begin{restatable}[]{corollary}{LargeEdge}
\label{cor:LargeEdge}
Let~$\Pi_V=(n_1,\ldots,n_k)$ be a schedule for the vertex selection jobs for some $n_1,\ldots,n_k \in \{1,\ldots,n\}$, and let $\Pi$ be an optimal extension of $\Pi_V$ to the set of vertex selection and large edge jobs. Then
\begin{itemize}
\item[$(i)$] $P(\Pi)=P_V+P_L+\sum_i (n-n_i) \cdot S(i)$.
\item[$(ii)$] $W(\Pi) = W_V + W_L + \sum_i (n-n_i) \cdot S(i) + \sum_{(i,j)} m_{i,j}^L(n_i,n_j)$.
\end{itemize} 
\end{restatable}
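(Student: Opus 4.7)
The plan is to compute $P(\Pi)$ and $W(\Pi)$ by summing the contributions of each type of early job, using Corollary~\ref{cor:VertexSelection} for the vertex selection jobs and Lemma~\ref{lem:LargeEdge} to identify exactly which large edge jobs and filler jobs are early. This is essentially a bookkeeping argument, and the only nontrivial step is recognizing a telescoping-style identity that combines the $L(i)$ contributions from the vertex selection gadget with the filler job contributions from the large edge gadgets.

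First I would invoke Corollary~\ref{cor:VertexSelection} to write the processing time and weight contributed by the vertex selection jobs as $P_V+\sum_i n_i\cdot L(i)+\sum_i(n-n_i)\cdot S(i)$ and $W_V+\sum_i n_i\cdot L(i)+\sum_i(n-n_i)\cdot S(i)$, respectively. Then I would use Lemma~\ref{lem:LargeEdge}(b) and (c) to count the early large edge jobs for each pair $(i,j)$: exactly $m$ of the jobs in $\{y^{i,j}_\ell,\neg y^{i,j}_\ell\}_\ell$ are early (each contributing $Y_{i,j}$ in processing time), and among these, by definition of $m^L_{i,j}(n_i,n_j)$, precisely $m^L_{i,j}(n_i,n_j)$ of them are of type $y^{i,j}_\ell$ (contributing an extra $+1$ in weight), while the remaining early edge jobs and all early edge jobs contribute $Y_{i,j}/N^\ell$ in weight. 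The filler jobs contribute $(n-n_i)\cdot N^{F(i,j)+1}+(n-n_j)\cdot N^{F(i,j)}$ to both processing time and weight.

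The key algebraic step is to observe that, by the definition of $L(i)$,
\begin{align*}
\sum_i n_i\cdot L(i) &= \sum_{i<j}\bigl(n_i\cdot N^{F(i,j)+1}+n_j\cdot N^{F(i,j)}\bigr),
\end{align*}
and therefore combining this with the filler contribution yields
\begin{align*}
\sum_i n_i\cdot L(i)+\sum_{i<j}\bigl((n-n_i)N^{F(i,j)+1}+(n-n_j)N^{F(i,j)}\bigr) &= \sum_{i<j}\bigl(n\cdot N^{F(i,j)+1}+n\cdot N^{F(i,j)}\bigr).
\end{align*}
Adding the $\sum_{i<j}m\cdot Y_{i,j}$ term coming from the early edge jobs then yields exactly $P_L=P^L_{0,0}$ by the definition of $P^L_{0,0}$, giving~$(i)$; adding instead $\sum_{i<j}\sum_\ell Y_{i,j}/N^\ell$ yields $W_L$, and the residual $+1$'s account for the $\sum_{(i,j)} m^L_{i,j}(n_i,n_j)$ term, giving~$(ii)$.

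The main obstacle, which is mild, is making sure that the digit-level contributions from $L(i)$ (which place values into both the first and second digit of each $(i,j)$ large block depending on whether $i$ appears as the smaller or larger index) match up correctly with the filler jobs $f^{i,j}_0,f^{i,j}_1$ and with the edge jobs' processing times. Once this identification is made, both equalities follow by straightforward arithmetic.
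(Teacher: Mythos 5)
Your proposal is correct and follows essentially the same route as the paper's proof: decompose $P(\Pi)$ and $W(\Pi)$ using Corollary~\ref{cor:VertexSelection} for the vertex selection jobs and properties $(b)$ and $(c)$ of Lemma~\ref{lem:LargeEdge} for the edge and filler jobs, then observe that the filler job contributions combine with the $L(i)$ terms to produce $\sum_{(i,j)}(n\cdot N^{F(i,j)+1}+n\cdot N^{F(i,j)})$, which together with $\sum_{(i,j)} m\cdot Y_{i,j}$ (resp. $\sum_{(i,j)}\sum_\ell Y_{i,j}/N^\ell$) gives $P_L$ (resp. $W_L$). You are slightly more explicit than the paper in unpacking $\sum_i n_i\cdot L(i)$ into per-block digit contributions, but the reasoning and bookkeeping are identical.
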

\begin{proof}
Due to Corollary~\ref{cor:VertexSelection} we have
$$
P(\Pi_V)= P_V + \sum_i n_i \cdot L(i) + \sum_i (n-n_i) \cdot S(i).
$$  
Now, according to property~$(c)$ of Lemma~\ref{lem:LargeEdge}, we know that for each $1 \leq i < j \leq k$, $\Pi$ schedules early $n- n_i$ copies of~$f^{i,j}_1$, and $n-n_j$ copies of~$f^{i,j}_0$. According to property $(b)$ of \Cref{lem:LargeEdge}, $\Pi$ schedules exactly one jobs from $\{y^{i,j}_\ell,y^{i,j}_\ell\}$ for each $\ell \in \{1, \ldots, m\}$. By construction these jobs have a total processing time of
$$
\sum_{(i,j)}  m \cdot Y_{i,j} + \sum_i (n-n_i) \cdot L(i),
$$
so in total we have
\begin{align*}
P(\Pi) \,\,&=\,\, P_V + \sum_{(i,j)}  m \cdot Y_{i,j} + \sum_i n_i \cdot L(i) + \sum_i (n-n_i) \cdot L(i)   + \sum_i (n-n_i) \cdot S(i)  \\
&= P_V + P_L + \sum_i (n-n_i) \cdot S(i).
\end{align*}
For the total weight of early jobs in $\Pi$ we have
$$
W(\Pi_V)=W_V + \sum_i n_i \cdot L(i) + \sum_i (n-n_i) \cdot S(i)
$$ 
according to Corollary~\ref{cor:VertexSelection}. The total weight of all large edge jobs is  
$$
\sum_{(i,j)}  \sum_\ell Y_{i,j}\,/\, N^\ell  + \sum_i (n-n_i) \cdot L(i) + \sum_{(i,j)} m_{i,j}^L(n_i,n_j),
$$
including all filler jobs, as each edge in $E_{i,j}$ which is lexicographically greater or equal to some~$(n_i,n_j)$ contributes an additional unit to $W(\Pi)$ according to property~$(c)$ of Lemma~\ref{lem:LargeEdge}. Thus, all together we have
\begin{align*}
W(\Pi) \,\,&=\,\, W_V + \sum_{(i,j)}  \sum_\ell Y_{i,j}\,/\, N^\ell + \sum_i n_i \cdot L(i) + \sum_i (n-n_i) \cdot L(i) \\  
&+\,\, \sum_i (n-n_i) \cdot S(i) + \sum_{(i,j)} m_{i,j}^L(n_i,n_j) \\
&= W_V + W_L + \sum_i (n-n_i) \cdot S(i) + \sum_{(i,j)} m_{i,j}^L(n_i,n_j).
\end{align*}
and the corollary follows.
\end{proof}

\begin{example}
\label{ex:LargeEdge}
Recall the schedule of Example~\ref{ex:VertexSelection}. After optimally scheduling all jobs from the $(2,3)$ large edge gadget (including the filler jobs), the total processing time of all early jobs is:
$$
444| \underbrace{4000 44|}_{\substack{(2,3)\\ \text{large}}} | \underbrace{ 0000 13|}_{\substack{(1,3)\\ \text{large}}} | \underbrace{0000 12|}_{\substack{(1,2)\\ \text{large}}} | \underbrace{0000 21|}_{\substack{(2,3)\\ \text{small}}} | \underbrace{0000 31|}_{\substack{(1,3)\\ \text{small}}} | \underbrace{0000 32|}_{\substack{(1,2)\\ \text{small}}} | 0
$$
The total weight of all early jobs is 
$$
888| \underbrace{1111 44|}_{\substack{(2,3)\\ \text{large}}} | \underbrace{ 0000 13|}_{\substack{(1,3)\\ \text{large}}} | \underbrace{0000 12|}_{\substack{(1,2)\\ \text{large}}} | \underbrace{0000 21|}_{\substack{(2,3)\\ \text{small}}} | \underbrace{0000 31|}_{\substack{(1,3)\\ \text{small}}} | \underbrace{0000 32|}_{\substack{(1,2)\\ \text{small}}} | 2
$$    
as $m_{2,3}^L(2,3)=2$ in the example.
\end{example}

\subsection{Small edge gadget}
\label{sec:small-edge}

We next describe the small edge gadget. Analogous to the large edge gadget, the role of the small edge gadget is to  count all edges that are lexicographically smaller or equal to pairs of selected vertices. It is constructed similarly to the large edge gadget, except that we focus on the small blocks of the integers. We start by defining~$P^S_{i,j}$, which is analogous to the value~$P^L_{i,j}$ used in the large edge gadget:
$$
P^S_{i,j} \,\,\, =\,\,\,   \sum_{(i_0,j_0) > (i,j)} \left(m \cdot Z_{i_0,j_0} + n \cdot N^{G(i_0,j_0)+1} +  n \cdot N^{G(i_0,j_0)} \right) 
$$

For each $1 \leq i < j \leq j$, we construct the \emph{$(i,j)$ small edge gadget} as follows: Let $\ell \in \{1,\ldots,m\}$, and suppose that $e^{i,j}_\ell = \{v^i_{\ell_i},v^j_{\ell_j}\}\in E_{i,j}$ is the $\ell$'th edge in~$E_{i,j}$. We construct two jobs $z^{i,j}_\ell$ and $\neg z^{i,j}_\ell$ associated with $e^{i,j}_\ell$ that have the following characteristics:
\begin{itemize}
\item $p(z^{i,j}_\ell)=Z_{i,j}$ and $w(z^{i,j}_\ell)= Z_{i,j}\,/\,N^\ell+ 1$.
\item $p(\neg z^{i,j}_\ell)=Z_{i,j}$ and $w(\neg z^{i,j}_\ell)= Z_{i,j}\,/\,N^\ell$.
\item $d(z^{i,j}_\ell) = P_V + P_L + P^S_{i,j} +  \ell \cdot Z_{i,j} + (n-\ell_i) \cdot N^{G(i,j)+1} + (n-\ell_j) \cdot N^{G(i,j)} + N^{G(i,j)-1}$.
\item $d(\neg z^{i,j}_\ell) = P_V + P_L + P^S_{i,j} +  \ell \cdot Z_{i,j} + n \cdot N^{G(i,j)+1} + n \cdot N^{G(i,j)} + N^{G(i,j)-1}$.
\end{itemize}

We will also add \emph{filler jobs} as done in the large edge gadgets. We construct $n$ copies of the job pair $\{g^{i,j}_0,g^{i,j}_1\}$ which have the following characteristics:
\begin{itemize}
\item $p(g^{i,j}_0)=w(g^{i,j}_0)=N^{G(i,j)}$. 
\item $p(g^{i,j}_1)=w(g^{i,j}_1)=N^{G(i,j)+1}$.
\item $d(g^{i,j}_0) = d(g^{i,j}_1) = P_V+P_L+ P^S_{i,j}+ \ell \cdot Z_{i,j} + n \cdot N^{G(i,j)+1} + n \cdot N^{G(i,j)}$.
\end{itemize}
Thus, altogether, the $(i,j)$ small edge gadget consists of all job pairs $\{z^{i,j}_\ell,\neg z^{i,j}_\ell\}$ for $\ell \in \{1,\ldots,m\}$, and all $n$ copies of the job pair $\{g^{i,j}_0,g^{i,j}_1\}$. Note that all these jobs have three different processing times in total.

Lemma~\ref{lem:SmallEdge} below is analogous to Lemma~\ref{lem:LargeEdge}, except that the structure that it conveys allows us to count all edges that are lexicographically smaller or equal (as opposed to larger or equal) to pairs of selected vertices. 
We have the following: 

\begin{lemma}
\label{lem:SmallEdge}%
Let~$\Pi_V=(n_1,\ldots,n_k)$ be a schedule for the vertex selection jobs for some $n_1,\ldots,n_k \in \{1,\ldots,n\}$, and let $\Pi_L$ be an optimal extension of $\Pi_V$ to the set of large edge jobs. Let $\Pi$ be an optimal extension of $\Pi_L$ to the set of small edge jobs. Then the following properties hold for each~$1 \leq i < j \leq k$: 
\begin{itemize}
\item[(a)] The total processing time $P$ of all vertex selection jobs, all large edge jobs, and all $(i_0,j_0)$ small jobs for $(i_0,j_0) > (i,j)$, which are early in $\Pi$ satisfies
$$
P \,\, \geq \,\, P_V + P_L+ P^S_{i,j} + (n-n_i) \cdot N^{G(i,j)+1} + (n-n_j) \cdot N^{G(i,j)} 
$$
and 
$$
P \,\, \leq \,\, P_V + P_L + P^S_{i,j} + (n-n_i) \cdot N^{G(i,j)+1} + (n-n_j) \cdot N^{G(i,j)} + N^{G(i,j)-1}.
$$
\item[(b)] For each $\ell \in \{1,\ldots,m\}$ we have that either job $z^{i,j}_\ell$ or job $\neg z^{i,j}_\ell$ is early in $\Pi$, but not both. Job~$z^{i,j}_\ell$ is early if and only if $(n_i,n_j) \geq (\ell_i,\ell_j)$, where $e^{i,j}_\ell=\{v^i_{\ell_i},v^j_{\ell_j}\}$ is the $\ell$'th edge in~$E_{i,j}$.
\item[(c)] Precisely $n_i$ copies of job $g^{i,j}_1$ and $n_j$ copies of job $g^{i,j}_0$ are scheduled early in~$\Pi$.
\end{itemize}
\end{lemma}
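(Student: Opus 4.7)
The plan is to mirror the proof of Lemma on the large edge gadget essentially verbatim, using backward induction on $(i,j)$ starting from $(k-1,k)$ and ending at $(1,2)$. The only conceptual difference from the large edge case is that here the vertex selection gadget contributes $(n-n_i)$ (rather than $n_i$) to the first two digits of each $(i,j)$ small block via the $\sum_i (n-n_i) \cdot S(i)$ term appearing in Corollary on Large Edge, and the due date of $z^{i,j}_\ell$ encodes $(n-\ell_i)$ and $(n-\ell_j)$ (rather than $\ell_i, \ell_j$). Accordingly the lexicographic comparison flips: $(n-n_i)\cdot N^{G(i,j)+1} + (n-n_j)\cdot N^{G(i,j)} \leq (n-\ell_i)\cdot N^{G(i,j)+1} + (n-\ell_j)\cdot N^{G(i,j)} + N^{G(i,j)-1}$ is equivalent, modulo the usual slack digit, to $(n_i,n_j) \geq (\ell_i,\ell_j)$, which is precisely the count we want.

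For the base case $(i,j)=(k-1,k)$, I would start from Corollary on Large Edge, which gives $P(\Pi_L) = P_V + P_L + \sum_i (n-n_i)\cdot S(i)$. Expanding $S(i)$ isolates the contribution $(n-n_{k-1})\cdot N^{G(k-1,k)+1} + (n-n_k)\cdot N^{G(k-1,k)}$ to the $(k-1,k)$ small block, while all remaining $(n-n_i)\cdot S(i)$ contributions for $i<k-1$ sit in strictly smaller blocks and therefore add up to at most $N^{G(k-1,k)-1}$ by our choice of $N$; this yields property $(a)$. Property $(b)$ then follows by induction on $\ell$: the combined weight of $z^{k-1,k}_\ell$ and $\neg z^{k-1,k}_\ell$ exceeds the total weight of all remaining small-edge jobs, so optimality of $\Pi$ forces one of the two to be scheduled early; the processing times of the two jobs sum to $2 Z_{k-1,k}$ which exceeds $d(\neg z^{k-1,k}_\ell)$, so at most one is early; and the weight-preference $w(z^{k-1,k}_\ell) > w(\neg z^{k-1,k}_\ell)$ combined with property $(a)$ and the due-date inequality above forces $z^{k-1,k}_\ell$ early precisely when $(n_{k-1},n_k) \geq (\ell_{k-1},\ell_k)$. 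Property $(c)$ then follows by the same filler-counting argument as in the large-edge proof: after all $(k-1,k)$ small-edge $z$-type jobs are scheduled, the remaining slack in the due date $d(g^{k-1,k}_0) = d(g^{k-1,k}_1)$ is exactly $n_{k-1}\cdot N^{G(k-1,k)+1} + n_k\cdot N^{G(k-1,k)}$ up to the $N^{G(k-1,k)-1}$ slack digit, which forces exactly $n_i$ copies of $g^{i,j}_1$ and $n_j$ copies of $g^{i,j}_0$ to be early.

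For the inductive step at $(i,j) < (k-1,k)$, by the induction hypothesis the total processing time of all already-scheduled small-edge jobs in the blocks strictly larger than $(i,j)$ equals $P^S_{0,0} - P^S_{i,j} - \sum_{(i_0,j_0) > (i,j)}((n-n_{i_0})\cdot N^{G(i_0,j_0)+1} + (n-n_{j_0})\cdot N^{G(i_0,j_0)})$ (with the $(i,j)$ slack absorbed in the $N^{G(i,j)-1}$ term). Adding this to the quantity given by Corollary on Large Edge yields property $(a)$; properties $(b)$ and $(c)$ then follow by exactly the same three arguments as in the base case, applied within the $(i,j)$ small block.

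The main obstacle is bookkeeping: one must carefully verify that the flipped orientation (with $(n-n_i)$ in the vertex-selection contribution and $(n-\ell_i)$ in the due date) indeed inverts the lexicographic comparison in the right way, and that the cascading slack digits from all earlier small blocks plus the residual $(n-n_i)\cdot S(i)$ contributions from $i<\min\{i_0 : (i_0,j_0)>(i,j)\}$ never overflow the single-digit buffer $N^{G(i,j)-1}$. This is exactly the kind of carry-control already used in the large-edge proof, and our choice $N = \Omega(kn + k^2 m)$ suffices here for precisely the same reason.
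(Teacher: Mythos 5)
Your proposal follows the paper's proof essentially verbatim: backward induction on $(i,j)$, using Corollary~\ref{cor:LargeEdge} as the starting point for property $(a)$, an induction on $\ell$ for property $(b)$, and the same filler-counting argument for property $(c)$, with the key observation that replacing $(n_i, \ell_i)$ by $(n-n_i, n-\ell_i)$ flips the lexicographic comparison as required. One small slip: the quantity that exceeds $d(\neg z^{k-1,k}_\ell)$ is $P + p(z^{k-1,k}_\ell) + p(\neg z^{k-1,k}_\ell)$, not the sum of processing times $2Z_{k-1,k}$ alone (the due date contains the large constant $P_V + P_L$), but the intended argument is clearly the right one.
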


\begin{proof}
The prove is via backward induction on $(i,j)$, starting with $(i,j)=(k-1,k)$. Consider first property $(a)$: Note that for $(i,j)=(k-1,k)$ we have $P=P(\Pi_L)$. As $P^S_{k-1,k}=0$, by Corollary~\ref{cor:LargeEdge} we have:
\begin{align*}
P = P(\Pi_L) \,&= \,  P_V + P_L + P^S_{k-1,k} + \sum_i  (n-v_i) \cdot S(i)\\
&\geq\, P_V + P_L + P^S_{k-1,k} + (n - n_{k-1}) \cdot N^{G(k-1,k)+1} + (n - n_k) \cdot N^{G(k-1,k)}. 
\end{align*}
On the other hand, $P$ is maximized if $\Pi_V$ selects vertex $1$ from each color class $V_i \neq V_{k-1},V_k$. As $N^{G(k-1,k)-1} > (n - 1) \cdot \sum^{k-2}_{i=1} S(i) + (n - n_{k-1}) \cdot \sum_{j=1}^{k-2} N^{G(j, k-1)} + (n- n_k) \cdot \sum_{j=1}^{k-2} N^{G(j, k)}$, we still have in this case that 
\begin{align*}
P \,\,&\leq\,\, P_V + P_L + (n-n_{k-1}) \cdot S(k-1) + (n-n_k) \cdot S(k) + (n-1) \cdot \sum^{k-2}_{i=1} S(i) \\
&\leq\,\, P_V + P_L + P^S_{k-1,k} + (n-n_{k-1}) \cdot N^{G(k-1, k) + 1} + (n-n_k) \cdot N^{G(k-1, k)} + N^{G(k-1,k)-1},
\end{align*}
and so property $(a)$ indeed holds.

Let us next prove property $(b)$ by induction on $\ell$. Let $\ell =1$. As is done in the proof of Lemma~\ref{lem:LargeEdge}, one can show that $\Pi$ schedules early exactly one job from the job pair~$\{z^{k-1,k}_1,\neg z^{k-1,k}_1\}$, and $z^{k-1,k}_1$ is scheduled early iff $P + p(z^{k-1,k}_1) \leq d(z^{k-1,k}_1)$. If $(n_{k-1},n_k) \geq (\ell_{k-1},\ell_k)$, then clearly~$(n-n_{k-1},n-n_k) \leq (n-\ell_{k-1},n-\ell_k)$, and so by property~$(a)$ we have
\begin{align*}
P + p(z^{k-1,k}_1) \,\, &\leq \,\, P_V \!+\! P_L  \!+\! (n-n_{k-1}) \!\cdot\! N^{G(k-1,k)+1} \!+\! (n-n_k) \!\cdot\! N^{G(k-1,k)} \!+\! N^{G(k,k-1)-1} \!+\! Z_{k-1,k} \\
&\leq \,\, P_V \!+\! P_L \!+\! (n-\ell_{k-1}) \!\cdot\! N^{G(k-1,k)+1} \!+\! (n-\ell_k) \!\cdot\! N^{G(k-1,k)} \!+\! N^{G(k-1,k)-1} \!+\! Z_{k-1,k}\\  
&=\,\, d(z^{k-1,k}_1).
\end{align*}
Otherwise, if $(n_{k-1},n_k) < (\ell_{k-1},\ell_k)$ then $(n-n_{k-1},n-n_k) > (n-\ell_{k-1},n-\ell_k)$, and so again by property~$(a)$ we have
\begin{align*}
P + p(z^{k-1,k}_1) \,\, &\geq \,\, P_V+P_L + (n-n_{k-1}) \cdot N^{G(k-1,k)+1} + (n-n_k) \cdot N^{G(k-1,k)} +Z_{k-1,k} \\
&>\,\,  P_V+P_L  + (n-n_{k-1}) \cdot N^{G(k-1,k)+1} + (n-n_k-1) \cdot N^{G(k-1,k)}\\ 
&+\,\, N^{G(k-1,k)-1} +Z_{k-1,k}\\
&\geq\,\,  P_V+P_L  + (n-\ell_{k-1}) \cdot N^{G(k-1,k)+1} + (n-\ell_k) \cdot N^{G(k-1,k)}\\ 
&+\,\, N^{G(k-1,k)-1}+Z_{k-1,k} \,\,=\,\, d(z^{k-1,k}_1). 
\end{align*}
Thus, property~$(b)$ holds for $\ell=1$. The inductive step for $\ell >1$ follows by the exact same arguments while observing that both $d(z^{i,j}_\ell)$ and $d(\neg z^{i,j}_\ell)$ contain the term $\ell \cdot Z_{i,j}$. 

Finally, for property~$(c)$, as argued in the proof of Lemma~\ref{lem:LargeEdge}, we know that all filler jobs are scheduled in~$\Pi$ at time~$P^*=P+m \cdot Z_{k-1,k}$, where first all filler jobs of type $g^{k-1,k}_1$ are scheduled, followed by all filler jobs of type $g^{k-1,k}_0$. According to property~$(a)$, we can schedule~$n_{k-1}$ copies of job~$g^{k-1,k}_1$ since
\begin{align*}
P^*+n_{k-1} \cdot p(g^{k-1,k}_1) \,\, &\leq \,\, P_V+P_L + m \cdot Z_{k-1,k} \\
&+\,\, n \cdot N^{G(k-1,k)+1} + (n-n_k) \cdot N^{G(k-1,k)} + N^{G(k-1,k)-1} \\
&<\,\, P_V+P_L + m \cdot Z_{k-1,k} \\
&+\,\, n \cdot N^{G(k-1,k)+1} + n \cdot N^{G(k-1,k)} + N^{G(k-1,k)-1} \,\,=\,\, d(g^{k-1,k}_1).    
\end{align*}
Scheduling more than $n_{k-1}$ copies is not possible since $P^* + (n_{k-1}+1) \cdot p(g^{k-1,k}_1)$ is at least 
$$
P_V+P_L +m \cdot Z_{k-1,k}+ (n+1) \cdot N^{G(k-1,k)+1} + n_k \cdot N^{G(k-1,k)} > d(g^{k-1,k}_1). 
$$
Thus, $\Pi$ schedules precisely $n_{k-1}$ copies of~$g^{k-1,k}_1$. A similar argument shows that~$\Pi$ also schedules precisely $n_k$ copies of~$g^{k-1,k}_0$.

This completes the proof of the base case $(i,j)=(k-1,k)$. The inductive step for property~$(a)$ follows by observing that by induction the total processing time of all early $(i_0,j_0)$ small edge jobs in~$\Pi$ with $(i_0,j_0) > (i,j)$ is exactly 
$$
P^S_{i,j} \,\,\,- \sum_{(i_0, j_0) > (i,j)} \bigl(n_{i_0} \cdot N^{G(i_0,j_0) + 1} + n_{j_0} \cdot N^{G(i_0,j_0)}\bigr).
$$
Properties $(b)$ and~$(c)$ then follow using the same arguments as above.
\end{proof}

For $1 \leq i < j \leq k$ and~$n_i,n_j \in \{1,\ldots,n\}$, define $m^S_{i,j}(n_i,n_j)$ to be the total number of edges in $E_{i,j}$ that are lexicographically smaller or equal to $(n_i,n_j)$. That is, the total number of edges $e^{i,j}_\ell=(v^i_{\ell_i},v^j_{\ell_j})\in E_{i,j}$ with $(n_i,n_j) \geq (\ell_i,\ell_j)$. Let~$W_S$ denote the following value: 
$$
W_S = \sum_{(i,j)}  \left(\sum_\ell Z_{i,j} \,/\, N^\ell + n \cdot N^{G(i,j)+1} +  n \cdot N^{G(i,j)} \right).
$$
We have the following corollary of Lemma~\ref{lem:SmallEdge}.

\begin{restatable}[]{corollary}{SmallEdge}
\label{cor:SmallEdge}
Let~$\Pi_V=(n_1,\ldots,n_k)$ be a schedule for the vertex selection jobs for some $n_1,\ldots,n_k \in \{1,\ldots,n\}$, let $\Pi_L$ be an optimal extension of $\Pi_V$ to the set of large edge jobs, and let $\Pi$ be an optimal extension of $\Pi_L$ to the set of small edge jobs. Then
$$
W(\Pi)  = W_V + W_L + W_S +\sum_{(i,j)} m_{i,j}^L(n_i,n_j) + \sum_{(i,j)} m_{i,j}^S(n_i,n_j).
$$
\end{restatable}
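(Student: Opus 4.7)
The plan is to mirror the proof of Corollary~\ref{cor:LargeEdge}, using Lemma~\ref{lem:SmallEdge} in place of Lemma~\ref{lem:LargeEdge}, and then to reconcile the resulting sum with the definition of $W_S$ via a reindexing that makes use of the definition of $S(i)$.

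First I would start from Corollary~\ref{cor:LargeEdge}$(ii)$, which already gives
$$
W(\Pi_L) \,=\, W_V + W_L + \sum_i (n-n_i) \cdot S(i) + \sum_{(i,j)} m^L_{i,j}(n_i,n_j).
$$
Then I would compute the weight contributed by the early small edge jobs in~$\Pi$ using Lemma~\ref{lem:SmallEdge}. By property~$(b)$, for every $(i,j)$ and every $\ell \in \{1,\ldots,m\}$ exactly one of $\{z^{i,j}_\ell, \neg z^{i,j}_\ell\}$ is early; both have weight $Z_{i,j}/N^\ell$ up to a $+1$ summand, and the $+1$ is contributed precisely when $z^{i,j}_\ell$ is early, i.e.\ precisely $m^S_{i,j}(n_i,n_j)$ times for each pair $(i,j)$. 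By property~$(c)$, the early fillers contribute $n_i \cdot N^{G(i,j)+1} + n_j \cdot N^{G(i,j)}$ to the weight. Summing over all $(i,j)$ gives the total weight of early small edge jobs as
$$
\sum_{(i,j)}\!\left(\sum_\ell Z_{i,j}/N^\ell + n_i \cdot N^{G(i,j)+1} + n_j \cdot N^{G(i,j)}\right) + \sum_{(i,j)} m^S_{i,j}(n_i,n_j).
$$

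The remaining step, which is the only real obstacle, is to see that the ``leftover'' terms combine into $W_S$. Unfolding the definition $S(i) = \sum_{j<i} N^{G(j,i)} + \sum_{j>i} N^{G(i,j)+1}$ and grouping by pairs $(i,j)$ with $i<j$, the digit $N^{G(i,j)+1}$ picks up a coefficient of $(n-n_i)$ (from $S(i)$) while $N^{G(i,j)}$ picks up $(n-n_j)$ (from $S(j)$), so
$$
\sum_i (n-n_i) \cdot S(i) \,=\, \sum_{(i,j)} \Bigl((n-n_i) \cdot N^{G(i,j)+1} + (n-n_j) \cdot N^{G(i,j)}\Bigr).
$$
Adding this to the filler contribution $\sum_{(i,j)}(n_i \cdot N^{G(i,j)+1} + n_j \cdot N^{G(i,j)})$ collapses each pair of coefficients to $n$, yielding exactly $\sum_{(i,j)}(n\cdot N^{G(i,j)+1} + n \cdot N^{G(i,j)})$; combining this with $\sum_{(i,j)}\sum_\ell Z_{i,j}/N^\ell$ gives $W_S$ by definition.

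Putting these pieces together produces
$$
W(\Pi) \,=\, W_V + W_L + W_S + \sum_{(i,j)} m^L_{i,j}(n_i,n_j) + \sum_{(i,j)} m^S_{i,j}(n_i,n_j),
$$
as claimed. The argument is essentially bookkeeping, with the only nontrivial move being the reindexing that pairs each $S(i)$ summand with its matching filler job across the two gadgets.
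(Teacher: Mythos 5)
Your proof is correct and follows the same route as the paper's: plug in Corollary~\ref{cor:LargeEdge}$(ii)$, add the weight of the early small edge jobs obtained from Lemma~\ref{lem:SmallEdge}$(b)$ and $(c)$, and observe that the $S(i)$ terms collapse to give $W_S$. The only difference is cosmetic: you unfold $\sum_i n_i\cdot S(i)$ into a sum over pairs $(i,j)$ to make the cancellation against $\sum_i (n-n_i)\cdot S(i)$ visible, whereas the paper keeps it in compact form; you also correctly attribute the $m^S_{i,j}$ count to property~$(b)$ (the paper's citation of property~$(c)$ there appears to be a typo).
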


\begin{proof}
By Corollary~\ref{cor:LargeEdge} we have 
$$
W(\Pi_L) = W_V + W_L + \sum_i (n-n_i) \cdot S(i) + \sum_{(i,j)} m_{i,j}^L(n_i,n_j).
$$
due to Corollary~\ref{cor:LargeEdge}. The total weight of all early small jobs is 
$$
\sum_{(i,j)}  \sum_\ell Z_{i,j} \,/\, N^\ell  + \sum_i n_i \cdot S(i) + \sum_{(i,j)} m_{i,j}^S(n_i,n_j),
$$
since each edge in $E_{i,j}$ which is lexicographically smaller or equal to $(v_i,v_j)$ contributes an additional unit to $W(\Pi)$ according to property~$(c)$ of Lemma~\ref{lem:SmallEdge}. Thus, all together we have
\begin{align*}
W(\Pi) \,\,&= W_V + W_L + \sum_{(i,j)}  \sum_\ell Z_{i,j} \,/\, N^\ell + \sum_i n \cdot S(i) + \sum_{(i,j)} m_{i,j}^L(n_i,n_j) + \sum_{(i,j)} m_{i,j}^S(n_i,n_j) \\
&=\,\, W_V+ W_L + W_S +\sum_{(i,j)} m_{i,j}^L(n_i,n_j) + \sum_{(i,j)} m_{i,j}^S(n_i,n_j),
\end{align*}
and so the corollary follows.
\end{proof}

\begin{example}
Consider the schedule of Example~\ref{ex:LargeEdge}. After scheduling all remaining large edge jobs, and all jobs from the $(2,3)$ small edge gadget (including the filler jobs), the total processing time of all early jobs is:
$$
444| \underbrace{4000 44|}_{\substack{(2,3)\\ \text{large}}} | \underbrace{ 4000 44|}_{\substack{(1,3)\\ \text{large}}} | \underbrace{4000 44|}_{\substack{(1,2)\\ \text{large}}} | \underbrace{4000 44|}_{\substack{(2,3)\\ \text{small}}} | \underbrace{0000 31|}_{\substack{(1,3)\\ \text{small}}} | \underbrace{0000 32|}_{\substack{(1,2)\\ \text{small}}} | 0
$$
The total weight of all early jobs is 
$$
888| \underbrace{1111 44|}_{\substack{(2,3)\\ \text{large}}} | \underbrace{ 1111 44|}_{\substack{(1,3)\\ \text{large}}} | \underbrace{1111 44|}_{\substack{(1,2)\\ \text{large}}} | \underbrace{1111 4|}_{\substack{(2,3)\\ \text{small}}} | \underbrace{0000 31|}_{\substack{(1,3)\\ \text{small}}} | \underbrace{0000 32|}_{\substack{(1,2)\\ \text{small}}} | 12
$$    
as $\sum_{(i,j)} m_{i,j}^L(i,j) + m_{2,3}^S(2,3)=12$ in the example.
\end{example}

\subsection{Correctness}

We have completed the description of all jobs in our \prob instance. Table~\ref{tab:jobs} provides a compact list of the characteristics of all these jobs. Lemma~\ref{lem:Final} below, along with Theorem~\ref{thm:mcc}, completes our proof of Theorem~\ref{thm:main} for parameter~$p_{\#}$. 

\begin{table}[h!]
\captionsetup{singlelinecheck=off}
\begin{center}
\begin{tabular}{c | c | c| c}
Job & Processing Time & Weight & Due Date\\
\hline
$x^*_i $ & $X_i + L(i)$ & $(n+1) \cdot X_i + L(i)$ & $P^V_{i-1} + N^{(m+2) \cdot 2\binom{k}{2}}$\\
$x_i $ & $X_i + L(i)$ & $X_i + L(i)$ 
&$d(x^*_i)$\\
$\neg x_i $ & $X_i  + S(i)$ & $X_i + S(i)$ & $d(x^*_i)$\\
\hline
\hline
$y^{i,j}_\ell$ & $Y_{i,j}$ & $Y_{i,j} \,/\, N^\ell + 1$ & $P^L_{i,j}(\ell) + \ell_i \cdot N^{F(i,j)+1} + \ell_j \cdot N^{F(i,j)}$ \\
$\neg y^{i,j}_\ell$ & $Y_{i,j}$ & $Y_{i,j} \,/\, N^\ell$ & $P^L_{i,j}(\ell) + n \cdot N^{F(i,j)+1} + n \cdot N^{F(i,j)}$\\
\hline
$f^{i,j}_1$ & $N^{F(i,j)+1}$ & $N^{F(i,j)+1}$ & $P^L_{i,j}(m) + n \cdot N^{F(i,j)+1}+ n \cdot N^{F(i,j)}$\\
$f^{i,j}_0$ & $N^{F(i,j)}$ & $N^{F(i,j)}$ & $d(f^{i,j}_1)$\\
\hline
\hline
$z^{i,j}_\ell$ & $Z_{i,j}$ & $Z_{i,j} \,/\, N^\ell + 1$ & $P^S_{i,j}(\ell) + (n\!-\!\ell_i) \cdot N^{G(i,j)+1} + (n\!-\!\ell_j) \cdot N^{G(i,j)}$ \\
$\neg z^{i,j}_\ell $ & $Z_{i,j}$ & $Z_{i,j} \,/\, N^\ell$ & $P^S_{i,j}(\ell) + n \cdot N^{G(i,j)+1} + n \cdot N^{G(i,j)}$\\
\hline
$g^{i,j}_1$ & $N^{G(i,j)+1}$ & $N^{G(i,j)+1}$ & $P^S_{i,j}(m) + n \cdot N^{G(i,j)+1}+ n \cdot N^{G(i,j)}$\\
$g^{i,j}_0$ & $N^{G(i,j)}$ & $N^{G(i,j)}$ & $d(g^{i,j}_1)$
\end{tabular}
\end{center}
\caption[sdsfa]{The weights, processing times, and due dates of all jobs in our construction. Here, $P^L_{i,j}(\ell)$ is shorthand notation for $P_V+P^L_{i,j} + \ell \cdot Y_{i,j} + N^{F(i,j) -1}$, and $P^S_{i,j}(\ell) = P_V+P_L+P^S_{i,j} + \ell \cdot Z_{i,j} + N^{G(i,j) -1}$.}
\label{tab:jobs}
\end{table}

\begin{lemma}
\label{lem:Final}%
There is a parameterized reduction from \mcc (restricted to nice $k$-partite graphs) parameterized by $k$ to \prob parameterized by~$p_{\#}$.
\end{lemma}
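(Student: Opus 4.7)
The plan is to close the reduction by specifying a target early-job weight and verifying the three conditions of Definition~\ref{def:ParamReduction}: an FPT-time construction, a parameter bounded by a function of $k$, and equivalence of yes-instances. Guided by Corollary~\ref{cor:SmallEdge}, I take
\[
W^\star \;=\; W_V + W_L + W_S + m\binom{k}{2} + \binom{k}{2},
\]
and ask whether the constructed \prob instance admits a schedule whose early jobs have total weight at least $W^\star$. The motivation is that for every pair $(i,j)$ one has $m^L_{i,j}(n_i,n_j) + m^S_{i,j}(n_i,n_j) \leq m+1$, with equality if and only if $\{v^i_{n_i}, v^j_{n_j}\} \in E_{i,j}$; summing over the $\binom{k}{2}$ pairs, the bound is tight iff the selected vertices form a clique.

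The FPT-time and parameter bounds are immediate from the construction. There are $O(kn+k^2m)$ jobs whose numeric data fit in $N^{O(k^2m)}$ with $N = \Theta(kn+k^2m)$, so the entire instance has polynomial size. Reading off Table~\ref{tab:jobs}, the vertex-selection jobs contribute $2k$ distinct processing times (since $p(x_i^\star)=p(x_i)$), and each of the $\binom{k}{2}$ large and small edge gadgets contributes three more, yielding $p_{\#} \leq 2k + 6\binom{k}{2}$, a function of $k$ only.

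For correctness, by Lemma~\ref{lem:EDD} I may restrict attention to EDD schedules. In the forward direction, given a $k$-clique $\{v^1_{n_1},\ldots,v^k_{n_k}\}$ in $G$, I use Lemma~\ref{lem:VertexSelectForward} to build $\Pi(n_1,\ldots,n_k)$ and extend it optimally, first through the large edge jobs and then through the small edge jobs; Corollary~\ref{cor:SmallEdge} then certifies that the total early weight reaches $W^\star$, since the clique structure makes each pair contribute $m+1$ to $\sum(m^L+m^S)$. Conversely, let $\Pi$ be an EDD schedule with $W(\Pi) \geq W^\star$. I argue that the restriction $\Pi_V$ of $\Pi$ to the vertex-selection jobs satisfies $W(\Pi_V) \geq W_V$, so Lemma~\ref{lem:VertexSelectBackward} forces $\Pi_V = \Pi(n_1,\ldots,n_k)$ for some $n_1,\ldots,n_k \in \{1,\ldots,n\}$. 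Propagating the same bookkeeping through the large and small edge gadgets shows that $\Pi$ is an optimal extension at each stage, so Lemmas~\ref{lem:LargeEdge} and~\ref{lem:SmallEdge} apply and Corollary~\ref{cor:SmallEdge} yields $\sum_{(i,j)} (m^L_{i,j}(n_i,n_j) + m^S_{i,j}(n_i,n_j)) \geq m\binom{k}{2} + \binom{k}{2}$, which forces every selected pair $\{v^i_{n_i},v^j_{n_j}\}$ to be an edge of $G$ and hence the selected vertices to form a $k$-clique.

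The main obstacle will be isolating each gadget's contribution to the total early weight: deriving $W(\Pi_V) \geq W_V$ from $W(\Pi) \geq W^\star$, and analogous bounds for the large and small edge sub-schedules, requires ruling out compensation by over-scheduled jobs from one gadget filling in for missing weight in another. This rests on two invariants of the base-$N$ encoding of Section~\ref{sec:PSharp}: distinct gadgets occupy disjoint digit blocks, and $N = \Theta(kn+k^2m)$ is large enough that no carry ever crosses a block boundary, so the weight contributed to the ``vertex-selection block'' can only come from vertex-selection jobs, and similarly for the edge blocks. Once these block invariants are made precise, the structural lemmas compose cleanly and the equivalence follows.
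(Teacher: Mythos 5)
Your proposal matches the paper's proof essentially step for step: the same target threshold $W^\star = W_V + W_L + W_S + (m+1)\binom{k}{2}$, the same count $p_{\#} = 2k + 6\binom{k}{2}$, the forward direction via Lemma~\ref{lem:VertexSelectForward} and Corollary~\ref{cor:SmallEdge}, and the backward direction by first extracting $W(\Pi_V) \geq W_V$ from the block-disjointness of the encoding and then invoking Lemma~\ref{lem:VertexSelectBackward}. The one imprecision is the phrase ``propagating the bookkeeping \emph{shows} that $\Pi$ is an optimal extension at each stage'' --- one cannot deduce that $\Pi$ itself is optimal from $W(\Pi)\geq W^\star$; rather, as the paper does, one assumes without loss of generality that $\Pi$ is an optimal extension (replacing it if necessary, since this can only increase the early weight), and then Corollary~\ref{cor:SmallEdge} pins down $\sum_{(i,j)}\bigl(m^L_{i,j}+m^S_{i,j}\bigr)$ exactly, forcing equality against the per-pair upper bound of $m+1$.
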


\begin{proof}
The reduction is as described throughout the section. It is in fact a reduction to the equivalent problem of \prob where the goal is to maximize the weight of early jobs. The reduction can be carried out in polynomial-time, and the total number of different processing-times $p_{\#}$ in the resulting \prob instance is $2k+6\binom{k}{2}$ (see Table~\ref{tab:jobs}). To complete the proof of the lemma, we argue that the graph $G=(V=V_1 \uplus \cdots \uplus V_k,E)$ of the input \mcc instance has a clique of size $k$ iff the constructed \prob instance has a schedule where the total weight of early jobs is at least $W_V+W_L+W_S+ (m+1)\cdot\binom{k}{2}$.

Suppose $G$ has a clique of size $k$ with $v^1_{n_1}\in V_i,\ldots,v^k_{n_k} \in V_k$. Then $\sum_{(i,j)} m^L_{i,j}(n_i,n_j) + m^S_{i,j}(n_i,n_j) = (m+1)\cdot \binom{k}{2}$. Thus, according to Corollary~\ref{cor:SmallEdge}, the optimal extension $\Pi$ of $\Pi_V=\Pi(n_1,\ldots,n_k)$ to the set of large and small edge jobs has total weight of early jobs $W(\Pi) = W_V+W_L+W_S+(m+1)\cdot \binom{k}{2}$. Conversely, suppose that there is a schedule $\Pi$ for the \prob instance with $W(\Pi) \geq W_V+W_L+W_S+(m+1)\cdot \binom{k}{2}$. Let $\Pi_V$ be the restriction of $\Pi$ to the vertex selection jobs. Then as $N^{(m+2) \cdot 2 \binom{k}{2} -1}$ is larger than the total weight of all large and small edge jobs, we have $W(\Pi_V) \geq W_V$ as otherwise we have $W(\Pi_V) < W_V - X_1 + N^{(m+2) \cdot 2 \binom{k}{2} -1}$, implying
$W(\Pi) < W_V$. Thus, $\Pi_V=\Pi(n_1,\ldots,n_k)$ for some $n_1,\ldots,n_k \in \{1,\ldots,n\}$ according to Lemma~\ref{lem:VertexSelectBackward}. We may assume without loss of generality that $\Pi$ is an optimal extension of $\Pi_V$ to set of large and small edge jobs. It follows then from Corollary~\ref{cor:SmallEdge} that 
$\sum_{(i,j)} m^L_{i,j}(n_i,n_j) + m^S_{i,j}(n_i,n_j) = (m+1)\cdot \binom{k}{2}$, which means that there are $\binom{k}{2}$ edges in $G$ between vertices in $\{v^1_{n_1},\ldots,v^k_{n_k}\}$. Thus, $v^1_{n_1},\ldots,v^k_{n_k}$ is a clique of size $k$ in $G$.
\end{proof}


\section{Parameter \boldmath{$w_{\#}$}}
\label{sec:WSharp}%

In the following section we adapt the reduction from Section~\ref{sec:PSharp} to show that \prob is W[1]-hard parameterized by $w_{\#}$, completing the proof of \Cref{thm:main}.
On a high level, we adapt the reduction of the previous section by swapping processing times and weights of the jobs, except for the last digit (\emph{i.e.} the counting block). This then also causes adaptions on the due dates of the jobs.

\subsection{Vertex selection gadget}

This gadget is identical to the one from \Cref{sec:vertex-selection}.
Thus, \Cref{lem:VertexSelectForward,lem:VertexSelectBackward,cor:VertexSelection} directly transfer.

\subsection{Large edge gadget}
\label{sec:LargeEdgeModified}%

Let $1 \leq i < j \leq k$. The $(i,j)$ large edge gadget is based on the same idea as the one from \Cref{sec:large-edge}.
Note in~\Cref{sec:large-edge}, we used the varying weights to ensure that one of $\{y^{i,j}_\ell, \neg y^{i,j}_\ell\}$ is scheduled early prior to scheduling any job from $\{y^{i,j}_{\ell-1}, \neg y^{i,j}_{\ell-1}\}$. As this results in an unbounded~$w_{\#}$, we instead use varying processing times to ensure this. More specifically, we essentially swap the weights and profits of the jobs, with the exception of the counting block. This results in a reversed scheduling order within the gadget, ensuring that one of $\{y^{i,j}_{\ell}, \neg y^{i,j}_{\ell}\}$ is scheduled early prior to scheduling any job from $\{y^{i,j}_{\ell -1}, \neg y^{i,j}_{\ell-1}\}$.

We begin by adapting~$P_{i,j}^L$ to be the following value:
$$
P^L_{i,j} \,\,\,=\,\,\, \sum_{(i_0,j_0) > (i,j)} \left(\sum_\ell Y_{i_0,j_0} \,/\, N^\ell + n \cdot N^{F(i_0,j_0)+1} +  n \cdot N^{F(i_0,j_0)} \right)
$$
Let $\ell \in \{1,\ldots,m\}$, and suppose that the $\ell$'th edge between~$V_i$ and~$V_j$ is the edge $e^{i,j}_\ell =\{v^i_{\ell_i},v^j_{\ell_j}\}$ for some $\ell_i,\ell_j \in \{1,\ldots,n\}$. The two jobs~$y^{i,j}_\ell$ and~$\neg y^{i,j}_\ell$ corresponding to~$e^{i,j}_\ell$ are constructed with the following characteristics:
\begin{itemize}
\item $p(y^{i,j}_\ell)=Y_{i,j} / N^\ell$ and $w(y^{i,j}_\ell) = Y_{i,j} + 1$.
\item $p(\neg y^{i,j}_\ell)=Y_{i,j} / N^\ell$ and $w(\neg y^{i,j}_\ell) = Y_{i,j}$.
\item $d(y^{i,j}_\ell)= P_V + P^L_{i,j} +  \sum_{\ell_0 \geq \ell} Y_{i,j}/N^{\ell_0} + \ell_i \cdot N^{F(i,j)+1} + \ell_j \cdot N^{F(i,j)} + N^{F(i,j)-1}$.
\item $d(\neg y^{i,j}_\ell)= P_V + P^L_{i,j} +  \sum_{\ell_0 \geq \ell} Y_{i,j}/N^{\ell_0} + n \cdot N^{F(i,j)+1} + n \cdot N^{F(i,j)} + N^{F(i,j)-1}$.
\end{itemize}
Note that $d(y^{i,j}_\ell) < d(y^{i,j}_{\ell -1})$ for each $1 < \ell \leq m$, and consequently, job~$y^{i,j}_\ell$ or $\neg y^{i,j }_\ell$ will be scheduled prior to $y^{i,j}_{\ell -1}$ or $\neg y^{i,j }_{\ell-1}$.

The filler jobs are constructed similarly to the previous section, except that their due dates need to be adjusted according to the modified processing times of jobs $y^{i,j}_\ell$ and $\neg y^{i,j}_\ell$:
\begin{itemize}
\item $p(f^{i,j}_0)=w(f^{i,j}_0)=N^{F(i,j)}$.
\item $p(f^{i,j}_1)=w(f^{i,j}_1)=N^{F(i,j)+1}$.
\item $d(f^{i,j}_0) = d(f^{i,j}_1) = P_V + P^L_{i,j} +  \sum_{\ell_0 \geq \ell} Y_{i,j}/N^{\ell_0} + n \cdot N^{F(i,j)+1} + n \cdot N^{F(i,j)} + N^{F(i,j)-1}$.
\end{itemize}

We also need to adapt $W_L$, as the weights of the large edge jobs have been modified:
$$
W_L = \sum_{(i,j)} \left( m \cdot Y_{i,j} + n\cdot N^{F(i,j) + 1} + n\cdot N^{F(i,j)}\right)
$$
Following these modifications, we now show \Cref{lem:LargeEdge} where only the proof of property $(b)$ differs from \Cref{sec:WSharp}:

\lemLargeEdge*

\begin{proof}
We prove that~$\Pi$ satisfies the properties of lemma by backward induction on $(i,j)$, starting with the base case of $(i,j)=(k-1,k)$. 

Property $(a)$ is shown as in \Cref{sec:PSharp}.

We next prove property $(b)$ for $(i,j)=(k-1,k)$.
Note that as both $w(y^{k-1,k}_\ell)$ and~$w(\neg y^{k-1,k}_\ell) \geq N^{F(k-1,k)+m+1}$ are larger than the total weight of all other large edge jobs, $m$ jobs from $\{y^{k-1, k}_\ell, \neg y^{k-1, k}_\ell : \ell \in [m]\}$ must be early in~$\Pi$, as otherwise $\Pi$ is not an optimal extension of $\Pi_V$.
As $\Pi$ is an EDD schedule, no job $y^{k-1, k}_\ell$ or $\neg y^{k-1, k}_\ell$ is scheduled before time $P$.
Note that for each $\ell \in [m]$, at most one of $y^{k-1, k}_\ell$ and $\neg y^{k-1, k}_\ell $ can be early as
\begin{align*}
P+p(y^{k-1,k}_\ell)& + p(\neg y^{k-1,k}_\ell) \,\, > \,\, P_V + p(y^{k-1,k}_\ell) + p(\neg y^{k-1,k}_\ell) 
=\,\, P_V +2 \cdot Y_{k-1,k}/ N^\ell\\ 
& > \,\, P_V + \sum_{\ell_0  > \ell} Y_{k-1,k}/ N^{\ell_0} + n \cdot N^{F(k-1,k)+1} + n \cdot N^{F(k-1,k)} + N^{F(k-1,k)-1}\\ 
&=\,\, d (\neg y^{k-1,k}_1) \,\, \ge \,\, d(y^{k-1,k}_1)\,.
\end{align*}
Consequently, for each $\ell \in [m]$, exactly one of $y^{k-1,k}_\ell$ and $\neg y^{k-1,k}_\ell$ is early.
As $\Pi$ is an EDD schedule, the early job of~$y^{k-1,k}_\ell$ and $\neg y^{k-1,k}$ is scheduled before all jobs~$y^{k-1,k}_{\ell_0}$ or $y^{k-1, k}_{\ell_0}$ for $\ell >  \ell_0$.
Scheduling exactly one of $y^{k-1, k}_\ell$ or $\neg y^{k-1, k}_\ell$ is always possible as
\begin{align*}
P + \sum_{\ell_0 \ge \ell} p(\neg y^{k-1,k}_{\ell_0}) \,\,&\leq \,\,  P_V + \sum_{\ell_0 \ge \ell} Y_{k-1, k} / N^{\ell_0} + n_{k-1} \cdot N^{F(k-1,k)+1} + n_k \cdot N^{F(k-1,k)} + N^{F(k-1,k)-1}\\
&\leq\,\, P_V  + \sum_{\ell_0 \ge \ell} Y_{k-1, k} / N^{\ell_0} + n \cdot N^{F(k-1,k)+1} + n \cdot N^{F(k-1,k)} + N^{F(k-1,k)-1} \\
\,\,& = \,\, d(\neg y^{k-1,k}_1)\,.
\end{align*}

As $w(y^{k-1,k}_\ell) > w(\neg y^{k-1,k}_\ell)$ and $p(y^{k-1, k}_\ell) = p(\neg y^{k-1, k}_\ell)$, an optimal extension of~$\Pi_V$ would schedule job~$y^{k-1,k}_\ell$ early if possible. If $(n_{k-1},n_k) \leq (\ell_{k-1},\ell_{k})$ then
\begin{align*}
P + \sum_{\ell_0 \ge \ell} p(y^{k-1,k}_{\ell_0}) \,\, &\leq \,\, P_V + n_{k-1} \cdot N^{F(k-1,k)+1} + n_k \cdot N^{F(k-1,k)} + N^{F(k-1,k)-1}+ \sum_{\ell_0 \ge \ell} Y_{k-1, k} / N^{\ell_0} \\
&\leq \,\, P_V + \ell_{k-1} \cdot N^{F(k-1,k)+1} + \ell_k \cdot N^{F(k-1,k)} + N^{F(k-1,k)-1} + \sum_{\ell_0 \ge \ell} Y_{k-1, k} / N^{\ell_0} \\
& \,\, = d(y^{k-1,k}_1),
\end{align*}
and so~$y^{k-1,k}_\ell$ is indeed early in~$\Pi$. If $(n_{k-1},n_k) > (\ell_{k-1},\ell_k)$, then
\begin{align*}
P + \sum_{\ell_0 \ge \ell} p(y^{k-1,k}_{\ell_0}) \,\, &\geq \,\, P_V + n_{k-1} \cdot N^{F(k-1,k)+1} + n_k \cdot N^{F(k-1,k)} + \sum_{\ell_0 \ge \ell} Y_{k-1, k} / N^{\ell_0} \\
&>\,\, P_V + n_{k-1} \cdot N^{F(k-1,k)+1} + (n_k-1) \cdot N^{F(k-1,k)} + N^{F(k-1,k)-1} +Y_{k-1,k}\\
&\geq\,\, P_V + \ell_{k-1} \cdot N^{F(k-1,k)+1} + \ell_k \cdot N^{F(k-1,k)} + N^{F(k-1,k)-1}+ \sum_{\ell_0 \ge \ell} Y_{k-1, k} / N^{\ell_0}\\
\,\,&=\,\, d(y^{k-1,k}_1), 
\end{align*}
and so~$y^{k-1,k}_\ell$ is not early in~$\Pi$. Thus, property~$(b)$ holds.

Property $(c)$ follows as in \Cref{sec:PSharp}.

We have thus shown that the lemma holds for the base case of~$(i,j)=(k-1,k)$. The inductive step for property~$(a)$ follows by observing that, by induction, the total processing time of all early $(i_0,j_0)$ large edge jobs in~$\Pi$ with $(i_0,j_0) > (i,j)$ is exactly 
$$
P^L_{i,j} \,\,\,- \sum_{(i_0, j_0) > (i,j)} \bigl(n_{i_0} \cdot N^{F(i_0,j_0) + 1} + n_{j_0} \cdot N^{F(i_0,j_0)}\bigr).
$$
Properties $(b)$ and~$(c)$ then follow using the same arguments as above.
\end{proof}
Having shown \Cref{lem:LargeEdge}, \Cref{cor:LargeEdge} now follows with the same proof as in \Cref{sec:PSharp}.
\LargeEdge*

\subsection{Small edge gadget}

Let $1 \leq i < j \leq k$. The $(i,j)$ small edge gadget is modified similarly to the $(i,j)$ large edge gadget. We adapt~$P_{i,j}^S$ to be the following value:
$$
P^S_{i,j} \,\,\,=\,\,\, \sum_{(i_0,j_0) > (i,j)} \left(\sum_\ell Z_{i_0,j_0} \,/\, N^\ell + n \cdot N^{G(i_0,j_0)+1} +  n \cdot N^{G(i_0,j_0)} \right)
$$
The two jobs~$z^{i,j}_\ell$ and~$\neg z^{i,j}_\ell$ corresponding to edge~$e^{i,j}_\ell =\{v^i_{\ell_i},v^j_{\ell_j}\}$ are constructed with the following characteristics:
\begin{itemize}
\item $p(z^{i,j}_\ell)=Z_{i,j} / N^\ell$ and $w(z^{i,j}_\ell) = Z_{i,j} + 1$.
\item $p(\neg z^{i,j}_\ell)=Z_{i,j} / N^\ell$ and $w(\neg z^{i,j}_\ell) = Z_{i,j}$.
\item $d(z^{i,j}_\ell)= P_V + P_L+P^S_{i,j} +  \sum_{\ell_0 \geq \ell} Z_{i,j}/N^{\ell_0} + (n-\ell_i) \cdot N^{G(i,j)+1} + (n-\ell_j) \cdot N^{G(i,j)} + N^{G(i,j)-1}$.
\item $d(\neg z^{i,j}_\ell)= P_V + P_L+P^S_{i,j} +  \sum_{\ell_0 \geq \ell} Z_{i,j}/N^{\ell_0} + n \cdot N^{G(i,j)+1} + n \cdot N^{G(i,j)} + N^{G(i,j)-1}$.
\end{itemize}
The filler jobs $g^{i,j}_0$ and $g^{i,j}_1$ are constructed as follows:
\begin{itemize}
\item $p(g^{i,j}_0)=w(g^{i,j}_0)=N^{G(i,j)}$.
\item $p(g^{i,j}_1)=w(g^{i,j}_1)=N^{G(i,j)+1}$.
\item $d(g^{i,j}_0) = d(g^{i,j}_1) = P_V + P_L+P^S_{i,j} +  \sum_{\ell_0 \geq \ell} Z_{i,j}/N^{\ell_0} + n \cdot N^{F(i,j)+1} + n \cdot N^{F(i,j)} + N^{F(i,j)-1}$.
\end{itemize}

We adapt $W_S$ to be following value:
$$
W_S = \sum_{(i,j)} \left( m \cdot Z_{i,j} + n\cdot N^{G(i,j) + 1} + n\cdot N^{G(i,j)}\right)
$$
\Cref{cor:SmallEdge} now applies after all modifications above, with the same modifications in the proof as for the large edge gadget. \SmallEdge*

\subsection{Correctness}

This completes the description of the modified construction. A compact description of the weights, processing times, and due dates of all jobs can be found in \Cref{tab:wsharp-jobs}. Using the analogous version of \Cref{cor:SmallEdge}, an analog of \Cref{lem:Final} given in \Cref{lem:FinalModified} below follows with the same proof.
This, together with \Cref{thm:mcc}, finishes the proof of \Cref{thm:main} for parameter~$w_\#$. 

\begin{lemma}
\label{lem:FinalModified}%
There is a parameterized reduction from \mcc (restricted to nice $k$-partite graphs) parameterized by $k$ to \prob parameterized by~$w_{\#}$.
\end{lemma}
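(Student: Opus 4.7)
The plan is to mirror the proof of Lemma~\ref{lem:Final} almost verbatim, substituting the gadgets and quantities from Section~\ref{sec:WSharp} for their counterparts in Section~\ref{sec:PSharp}. First I would observe that the reduction described in Section~\ref{sec:WSharp} runs in polynomial time. Next, by inspecting Table~\ref{tab:wsharp-jobs}, I would verify that the number of distinct weights in the produced instance is bounded by a function of $k$: since the modification essentially swapped the roles of processing times and weights (apart from a $+1$ adjustment in the counting block of the edge-indicator jobs), the number of distinct weights is at most $3k + 8\binom{k}{2}$. This gives $w_{\#} = g(k)$ for a computable $g$, so the reduction is parameterized as required.

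The core of the proof is the equivalence: $G$ contains a clique of size $k$ if and only if the constructed instance admits a schedule with total weight of early jobs at least $W_V + W_L + W_S + (m+1)\binom{k}{2}$. For the forward direction, given a $k$-clique realised by vertices $v^1_{n_1}, \ldots, v^k_{n_k}$, I would apply Lemma~\ref{lem:VertexSelectForward} to obtain $\Pi_V = \Pi(n_1, \ldots, n_k)$ and extend it optimally through the large edge jobs and then the small edge jobs. The modified Corollary~\ref{cor:SmallEdge} yields
$$
W(\Pi) \,=\, W_V + W_L + W_S + \sum_{(i,j)} \bigl( m^L_{i,j}(n_i, n_j) + m^S_{i,j}(n_i, n_j) \bigr).
$$
Every edge of $E_{i,j}$ is counted in exactly one of the two terms, except for the edge $\{v^i_{n_i}, v^j_{n_j}\}$ itself, which is counted in both. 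Since our selected vertices form a clique, this distinguished edge exists for every pair $(i,j)$, so the total sum equals exactly $(m+1)\binom{k}{2}$.

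For the backward direction, I would take any EDD schedule $\Pi$ (justified by Lemma~\ref{lem:EDD}) with $W(\Pi) \geq W_V + W_L + W_S + (m+1)\binom{k}{2}$, and without loss of generality assume that $\Pi$ is an optimal extension of its restriction $\Pi_V$ to the vertex selection jobs. A direct magnitude comparison --- the total weight of all large and small edge jobs together is strictly less than one unit in the first digit of the vertex selection block --- forces $W(\Pi_V) \geq W_V$. Lemma~\ref{lem:VertexSelectBackward} then furnishes indices $n_1, \ldots, n_k \in \{1, \ldots, n\}$ with $\Pi_V = \Pi(n_1, \ldots, n_k)$, and Corollary~\ref{cor:SmallEdge} forces $\sum_{(i,j)}(m^L_{i,j}(n_i, n_j) + m^S_{i,j}(n_i, n_j)) = (m+1)\binom{k}{2}$. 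Since each pair can contribute at most $m+1$, attaining equality only when $\{v^i_{n_i}, v^j_{n_j}\} \in E_{i,j}$, every pair must in fact be an edge, so $\{v^1_{n_1}, \ldots, v^k_{n_k}\}$ is a $k$-clique in $G$.

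The main technical obstacle has already been dispensed with in the earlier subsections of Section~\ref{sec:WSharp}, where the reversed internal scheduling order (induced by varying processing times rather than weights inside the edge gadgets) required re-proving the counterpart of Lemma~\ref{lem:LargeEdge}. Once those structural lemmas and their corollaries are in place, the combinatorial counting argument is identical to the one used for parameter $p_{\#}$, so the proof of \Cref{lem:FinalModified} reduces to the bookkeeping sketched above together with the weight-count verification in the opening paragraph.
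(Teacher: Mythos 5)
Your proposal is correct and follows precisely the same strategy the paper intends: the paper states that \Cref{lem:FinalModified} ``follows with the same proof'' as \Cref{lem:Final}, using the $w_\#$-versions of the structural lemmas and corollaries from Section~\ref{sec:WSharp}, which is exactly what you do (with the correct weight count of at most $3k + 8\binom{k}{2}$, the same threshold $W_V + W_L + W_S + (m+1)\binom{k}{2}$, and the same forward/backward equivalence argument). Your added explanation of why $m^L_{i,j}(n_i,n_j) + m^S_{i,j}(n_i,n_j)$ equals $m+1$ precisely when $\{v^i_{n_i},v^j_{n_j}\}\in E_{i,j}$ is a correct and useful elaboration of a point the paper states tersely.
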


\begin{table}
\captionsetup{singlelinecheck=off}
\begin{center}
\begin{tabular}{c | c | c| c}
Job & Processing Time & Weight & Due Date\\
\hline
$x^*_i $ & $X_i + L(i)$ & $(n+1) \cdot X_i + L(i)$ & $P^V_{i-1} + N^{(m+2) \cdot 2\binom{k}{2}}$\\
$x_i $ & $X_i + L(i)$ & $X_i + L(i)$ 
&$d(x^*_i)$\\
$\neg x_i $ & $X_i  + S(i)$ & $X_i + S(i)$ & $d(x^*_i)$\\
\hline
\hline
$y^{i,j}_\ell$ & $Y_{i,j} \,/\, N^\ell$ & $Y_{i,j} + 1$ & $P^L_{i,j}(\ell) + \ell_i \cdot N^{F(i,j)+1} + \ell_j \cdot N^{F(i,j)}$ \\
$\neg y^{i,j}_\ell$ & $Y_{i,j} \,/\, N^\ell$ & $Y_{i,j}$ & $P^L_{i,j}(\ell) + n \cdot N^{F(i,j)+1} + n \cdot N^{F(i,j)}$\\
\hline
$f^{i,j}_1$ & $N^{F(i,j)+1}$ & $N^{F(i,j)+1}$ & $P^L_{i,j}(m) + n \cdot N^{F(i,j)+1}+ n \cdot N^{F(i,j)}$\\
$f^{i,j}_0$ & $N^{F(i,j)}$ & $N^{F(i,j)}$ & $d(f^{i,j}_1)$\\
\hline
\hline
$z^{i,j}_\ell$ & $Z_{i,j} \,/\, N^\ell$ & $Z_{i,j} + 1$ & $P^S_{i,j}(\ell) + (n\!-\!\ell_i) \cdot N^{G(i,j)+1} + (n\!-\!\ell_j) \cdot N^{G(i,j)}$ \\
$\neg z^{i,j}_\ell $ & $Z_{i,j} \,/\, N^\ell$ & $Z_{i,j}$ & $P^S_{i,j}(\ell) + n \cdot N^{G(i,j)+1} + n \cdot N^{G(i,j)}$\\
\hline
$g^{i,j}_1$ & $N^{G(i,j)+1}$ & $N^{G(i,j)+1}$ & $P^S_{i,j}(m) + n \cdot N^{G(i,j)+1}+ n \cdot N^{G(i,j)}$\\
$g^{i,j}_0$ & $N^{G(i,j)}$ & $N^{G(i,j)}$ & $d(g^{i,j}_1)$
\end{tabular}
\end{center}
\caption[]{Weights, processing times, and due dates for our reduction showing W[1]-hardness with respect to~$w_{\#}$.
Note that constants~$P_{i,j}^L (\ell)$ and $P_{i,j}^S (\ell)$ are different from \Cref{sec:PSharp}.
}
\label{tab:wsharp-jobs}
\end{table}

\section{ETH Lower Bounds}
\label{sec:ETH}%

In the following we show that Theorem~\ref{thm:main} implies that the $O(n^{p_\#+1} \lg n)$ and $O(n^{w_{\#}+1} \lg n)$ time algorithms from~\cite{HermelinKPS21} are close to being optimal under the Exponential Time Hypothesis (ETH). Note that since \textsc{Multicolored Clique} cannot be solved in $n^{o(k)}$ time under ETH~\cite{ChenHKX06}, our reduction from \Cref{sec:PSharp,sec:WSharp} directly implies that there is no $n^{o(\sqrt{p_{\#}})}$ or $n^{o(\sqrt{w_{\#}})}$-time algorithm for \prob, as our reduction uses $O(k^2)$ many different processing times or weights. We will now sketch how to improve these lower bounds to $n^{o(p_{\#}/\lg p_\#)}$ and $n^{o(w_\# / \lg w_\#)}$.
In order to do so, we use the standard trick to reduce from \textsc{Partitioned Subgraph Isomorphism} instead of \textsc{Multicolored Clique} (see also \cite{KarthikMPS23} for a list of papers using this trick).
\begin{definition}
Given an $\ell$-partite graph $G = (V_1 \uplus \ldots \uplus V_\ell, E)$ and a graph~$H$ on the vertex set~$\{1,\ldots,\ell\}$, the \textsc{Partitioned Subgraph Isomorphism} problem asks to determine whether~$G$ contains a vertex subset~$\{v_1,  \ldots, v_\ell\}$ with $v_i\in V_i$ for any $i \in \{1,\ldots,\ell\}$ such that~$\{v_i, v_j\} \in E(G)$ whenever $\{i,j\} \in E(H)$.
\end{definition}
\noindent Our almost tight ETH-based lower bounds for \prob are based on the following lower bound for \textsc{Partitioned Subgraph Isomorphism} due to Marx~\cite{Marx10}:
\begin{theorem}[\cite{Marx10}]
\textsc{Partitioned Subgraph Isomorphism} cannot be solved in $f (k) \cdot n^{o(k/ \lg k)}$ time, where $f$ is an arbitrary
function and $k = |E(H)|$ is the number of edges of the smaller graph $H$, unless ETH is false.
\end{theorem}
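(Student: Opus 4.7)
The plan is to give a parameterized reduction from (sparsified) \textsc{3-SAT} to \textsc{Partitioned Subgraph Isomorphism} in which the pattern graph $H$ is taken to be a bounded-degree expander. By the Impagliazzo--Paturi--Zane Sparsification Lemma together with ETH, there is no $2^{o(n)}$-time algorithm for 3-SAT on $n$ variables restricted to instances with $m=O(n)$ clauses; so the goal is to show that a hypothetical $f(k)\cdot n^{o(k/\lg k)}$-time PSI algorithm would yield such a $2^{o(n)}$ algorithm for sparsified 3-SAT.

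Pick a parameter $\ell=\ell(n)$ (to be tuned), and let $H$ be a 3-regular expander on $\ell$ vertices, so that $k=|E(H)|=3\ell/2$. Partition the $n$ variables of the input formula $\varphi$ into $\ell$ buckets $B_1,\ldots,B_\ell$ of size $n/\ell$ each. Build the host graph $G$ with $\ell$ color classes $V_1,\ldots,V_\ell$, and place edges only between $V_i$ and $V_j$ for which $\{i,j\}\in E(H)$, so that any PSI solution is forced to realize a copy of $H$ in $G$ with exactly one vertex per class. Each vertex of $V_i$ will encode a full Boolean assignment to the variables of~$B_i$ together with bookkeeping fields described next.

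The technical core is a \emph{routing} step for clauses. Each of the $O(n)$ clauses of $\varphi$ involves variables from at most three buckets, but the sparse pattern $H$ generally contains no direct edge between those buckets. Using the expansion of $H$, one can assign to every clause $C$ a short Steiner connector in $H$ (of length $O(\lg \ell)$) joining the buckets of its variables, in such a way that each edge of $H$ is crossed by $O(n\lg\ell/\ell)$ routed clauses and each bucket is touched by $O(n\lg\ell/\ell)$ of them. The state carried by a vertex in $V_i$ records, on top of the assignment to $B_i$, a propagation token for every clause whose routing path visits bucket $i$, plus, at the endpoint buckets, the designation of a literal of that clause claimed to be satisfied. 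The edges put between $V_i$ and $V_j$ (for $\{i,j\}\in E(H)$) enforce (a) agreement of all tokens crossing this edge and (b) local satisfiability checks at clause endpoints. A satisfying assignment of $\varphi$ canonically yields a copy of $H$ inside $G$, and conversely any such copy certifies a satisfying assignment.

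For the running time, $|V_i|\le 2^{O(n\lg\ell/\ell)}$, so $N:=|V(G)|\le 2^{O(n\lg\ell/\ell)}$. A hypothetical PSI algorithm would then decide $\varphi$ in time
\begin{align*}
f(k)\cdot N^{o(k/\lg k)} \;=\; f(\tfrac{3\ell}{2})\cdot 2^{O(n\lg\ell/\ell)\cdot o(\ell/\lg\ell)} \;=\; f(\ell)\cdot 2^{o(n)}.
\end{align*}
Choosing $\ell=\ell(n)\to\infty$ slowly enough that $f(\ell)=2^{o(n)}$ (possible for any computable $f$) gives a $2^{o(n)}$-time 3-SAT algorithm, contradicting ETH. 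The principal obstacle is the routing/bookkeeping step: one must simultaneously keep the per-bucket congestion at $O(n\lg\ell/\ell)$, so that $|V_i|$ remains $2^{O(n\lg\ell/\ell)}$, and make the clause-consistency constraints both local enough to be enforced on single edges of $H$ and global enough to certify every clause. This is where the expansion of $H$ and a careful multicommodity-flow/Steiner-tree construction on $H$ are essential, and it is the combinatorial heart of Marx's argument.
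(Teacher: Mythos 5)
You are proving a statement that the paper does not prove at all: Theorem~5 is imported verbatim from Marx~\cite{Marx10}, and nothing in the paper depends on how it is established. So the only question is whether your sketch would stand on its own. Your route is a legitimate and, in fact, well-known simplification for the edge-parameterized version as stated: since the lower bound only has to hold for \emph{some} family of patterns, restricting $H$ to bounded-degree expanders (where $|E(H)|=\Theta(\ell)=\Theta(\mathrm{tw}(H))$) is fine, and reducing from sparsified \textsc{3-SAT} by packing $n/\ell$ variables per color class and routing clause checks along $O(\lg\ell)$-length connectors is exactly where the $\lg k$ loss comes from. Marx's own argument is substantially more general (it handles every recursively enumerable pattern class of unbounded treewidth, via the ``embedding power'' machinery based on multicommodity flows and separators); your expander shortcut buys simplicity at the price of generality, which is acceptable for the statement quoted here.

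That said, as a proof the proposal has a genuine gap precisely at the step you yourself call the combinatorial heart. First, the claim that each bucket and each edge of $H$ carries only $O(n\lg\ell/\ell)$ routed clauses does not follow from averaging: you need an actual low-congestion, low-dilation routing theorem for bounded-degree expanders (Leighton--Rao / Broder--Frieze--Upfal-style multicommodity routing), and to apply it you must first bound the number of clause endpoints per bucket, which requires preprocessing the sparsified formula to bounded variable occurrence (or partitioning variables by occurrence count rather than into equal-size buckets) --- partitioning $n$ variables into equal buckets alone does not balance clause incidences. Second, the consistency gadget (constant-size tokens per routed clause, equality checks on connector edges, literal verification at the endpoint bucket) is plausible and can be made to work with $O(1)$ bits per clause per visited bucket, but it is asserted rather than verified in either direction of the equivalence. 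Finally, the concluding choice of $\ell(n)$ to neutralize the arbitrary function $f$ is the standard slowly-growing-parameter trick and is fine. With the routing lemma and the bounded-occurrence preprocessing supplied, your outline becomes a correct proof of the quoted theorem; without them it is an outline, not a proof.
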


As for \textsc{Multicolored Clique}, we may assume without loss of generality that the graph~$G$ in \textsc{Partitioned Subgraph Isomorphism} is nice. (This can be achieved by adding isolated vertices and edges between isolated vertices.) The key observation we use is that we can remove all large and small $(i,j)$ blocks from all integers considered in our construction, for any $1 \leq i < j \leq k$ with $\{i,j\} \notin E(H)$. This is because now we do not need to ensure that there is an edge between $v^i_{n_i} \in V_i$ and~$v^i_{n_j} \in  V_j$. Practically speaking, we remove all $(i,j)$ large and small edge jobs for each $(i,j)$ with $\{i,j\} \notin E(H)$, thereby reducing the number of different processing times (respectively weights) to $2\ell + 3k$ (respectively $2\ell+ 4k$). Apart from this, we also need to adapt the due dates of the remaining large and small edge jobs by subtracting the total processing times and weights of all early jobs removed in this way. 
Finally, in the vertex selection gadget, we delete from the constants~$L(i)$ and~$S(i)$ all terms~$N^{F(j, i)}$, $N^{F(i,j) +1}$, $N^{G(j, i)}$, or $N^{G(j, i)}$ with $\{i,j\} \notin E(H)$.
The proof of correctness is then analogous to \Cref{sec:PSharp,sec:WSharp}, resulting in \Cref{cor:eth-hardness}:
\ethhard*

\section{Conclusions}

In the current paper we completely resolved the parameterized complexity status of \prob with respect to parameters $p_{\#}$, $w_{\#}$, and $d_{\#}$. Our result also gives almost ETH tight bounds in the case when only one of $p_{\#}$ or $w_{\#}$ is bounded by a constant. However, there still remains several research directions to explore regarding the \prob problem, and its variants. Below we list a few questions that still remain open:

\begin{itemize}
\item Can the gap between lower and upper bound in \cref{cor:eth-hardness} be closed? That is, can one show a lower bound of $n^{o(k)}$ or can \prob be solved in $n^{O(k/ \lg k)}$ time, for $k=p_{\#}$ or $k=w_{\#}$?
\item The current FPT algorithms solving \prob for parameters $k=p_{\#}+w_{\#}$, $k=p_{\#}+d_{\#}$, or~$k=w_{\#}+d_{\#}$ have running times of the form~$2^{O(k\lg \lg k)} \cdot n^{O(1)}$ using the recent ILP-algorithm by Reis and Rothvoss~\cite{ReisR23}. Can any of these runtimes be improved to $2^{O(k)} \cdot n$, or can we a~$2^{\Omega (k\lg \lg k)} \cdot n^{O(1)}$  lower-bound?
\item Our result shows that \prob is $W[1]$-hard with respect to parameters~$p_{\#}$ and~$w_{\#}$, but it does not show that the problem is \emph{in} $W[1]$ for any of these parameters. Is \prob contained in $W[t]$ for some $t \geq 1$?
\end{itemize}

\bibliography{biblio}
  
\end{document}